 \definecolor{BLACK}{gray}{0}
 \definecolor{WHITE}{gray}{1}
 \definecolor{RED}{rgb}{1,0,0}
 \definecolor{GREEN}{rgb}{0,1,0}
 \definecolor{BLUE}{rgb}{0,0,1}
 \definecolor{CYAN}{cmyk}{1,0,0,0}
 \definecolor{MAGENTA}{cmyk}{0,1,0,0}
 \definecolor{YELLOW}{cmyk}{0,0,1,0}
\theoremstyle{plain}
\newtheorem{thm}{\protect\theoremname}
  \theoremstyle{definition}
  \newtheorem{defn}[thm]{\protect\definitionname}
  \theoremstyle{plain}
  \newtheorem{cor}[thm]{\protect\corollaryname}
  \theoremstyle{plain}
  \newtheorem{lem}[thm]{\protect\lemmaname}
\newcommand{\bea}{\begin{eqnarray}}
\newcommand{\eea}{\end{eqnarray}}
\def\bi{\begin{itemize}}
\def\ei{\end{itemize}}
\def\bc{\begin{center}}
\def\ec{\end{center}}
\def\C{\hbox{$\mit I$\kern-.7em$\mit C$}}
\def\R{\hbox{$\mit I$\kern-.6em$\mit R$}}
\def\ket#1{|#1\rangle}
\def\tr{\mathrm{tr}}
\def\ket#1{\left| #1\right>}
\def\bra#1{\left< #1\right|}
\def\bk#1{\langle #1 \rangle}
\newcommand{\proj}[1]{\ket{#1}\bra{#1}}
\definecolor{myurlcolor}{rgb}{0,0,0.7}
\providecommand{\definitionname}{Definition}
\providecommand{\theoremname}{Theorem}
\providecommand{\definitionname}{Definition}
\providecommand{\theoremname}{Theorem}
\providecommand{\definitionname}{Definition}
\providecommand{\theoremname}{Theorem}
\providecommand{\definitionname}{Definition}
\providecommand{\theoremname}{Theorem}
\providecommand{\corollaryname}{Corollary}
\providecommand{\definitionname}{Definition}
\providecommand{\lemmaname}{Lemma}
\providecommand{\theoremname}{Theorem}
\providecommand{\corollaryname}{Corollary}
\providecommand{\definitionname}{Definition}
\providecommand{\lemmaname}{Lemma}
\providecommand{\theoremname}{Theorem}
\providecommand{\corollaryname}{Corollary}
\providecommand{\definitionname}{Definition}
\providecommand{\lemmaname}{Lemma}
\providecommand{\theoremname}{Theorem}
\providecommand{\corollaryname}{Corollary}
\providecommand{\definitionname}{Definition}
\providecommand{\lemmaname}{Lemma}
\providecommand{\theoremname}{Theorem}
\providecommand{\corollaryname}{Corollary}
\providecommand{\definitionname}{Definition}
\providecommand{\lemmaname}{Lemma}
\providecommand{\theoremname}{Theorem}
  \providecommand{\corollaryname}{Corollary}
  \providecommand{\definitionname}{Definition}
  \providecommand{\lemmaname}{Lemma}
\providecommand{\theoremname}{Theorem}
\begin{document}

\title{The power of the resource theory of genuine quantum coherence}

\author{Julio I. de Vicente}

\affiliation{Departamento de Matemáticas, Universidad Carlos III de Madrid, Avda.\
de la Universidad 30, E-28911, Leganés (Madrid), Spain}

\author{Alexander Streltsov}

\affiliation{Dahlem Center for Complex Quantum Systems,\\
 Freie Universität Berlin, D-14195 Berlin, Germany}

\date{April 18, 2016}
\begin{abstract}
Any quantum resource theory is based on free states and free operations,
i.e., states and operations which can be created and performed at
no cost. In the resource theory of coherence free states are diagonal
in some fixed basis, and free operations are those which cannot create
coherence for some particular experimental realization. Recently,
some problems of this approach have been discussed, and new sets of
operations have been proposed to resolve these problems. As an example
for such a problem, a quantum operation can be incoherent in one particular
experimental realization, but it still can create large amount of
coherence in another. This problem is resolved by the framework of
genuine quantum coherence: the free operations proposed there preserve
all diagonal states, and it was shown that such operations cannot
create coherence regardless of their experimental realization. Here,
we investigate this concept further and study other more general sets
of operations which cannot create coherence in any experimental realization.
We analyze in detail the mathematical structure of this class of maps
and use this to study possible state transformations under these operations.
We show that deterministic manipulation is severely limited, even
in the asymptotic settings. In particular, this framework does not
have a unique golden unit, i.e., there is no single state from which
all other states can be created deterministically with the free operations.
On the other hand, we show that these theories allow for a richer
structure in what comes to stochastic manipulation of states. In particular,
we determine the optimal probability of conversion among pure states
with genuinely incoherent operations. Still, the aforementioned limitations
suggest that any reasonably powerful resource theory of coherence
must contain free operations which can potentially create coherence
in some experimental realization.
\end{abstract}
\maketitle

\section{Introduction}

Quantum mechanics offers a radically different description of reality
that collides with the intuition behind that of classical physics.
At first this was only regarded from the foundational point of view.
However, in recent decades it has been realized that the fundamentally
different features of quantum theory can be exploited to realize revolutionary
applications \cite{Nielsen10}. Quantum information theory has taught
us that quantum technologies can outperform classical ones in a large
variety of tasks such as communication, computation or metrology.
This has led to identify and study nonclassical salient properties
of quantum theory, like entanglement \cite{Plenio2007,HorodeckiRMP09}
or nonlocality \cite{Brunner2014} which stem from the tensor product
structure. This is done in order to understand better from a theoretical
perspective the full potential of quantum resources and, also, to
seek for new paths for applications. Undoubtedly, the superposition
principle, which leads to coherence, is another characteristic trait
of quantum mechanics; however, a rigorous theoretical study of this
phenomenon on the analogy of the aforementioned resources has only
been initiated very recently \cite{Aberg2006,Gour2008,Gour2009,Levi14,Marvian2013,Marvian2014,Baumgratz2014,Winter2015}.
Nevertheless, quantum coherence is the basis of single-particle interferometry
\cite{Oi2003,Aberg2004,Oi2006} and it is believed to play a nontrivial
role in the outstanding efficiency of several biological processes
\cite{Lloyd2011,Li2012,Huelga13,Singh2014}. This grants coherence
the status of a resource and makes necessary to develop a solid framework
allowing to asses and quantify this phenomenon together with the rules
for its manipulation.

Resource theories have proven to be a very successful framework to
build a rigorous and systematic study of the possibilities and limitations
of distinct features of quantum information theory. Originally developed
in the case of entanglement theory \cite{Plenio2007,HorodeckiRMP09},
the conceptual elegance and applicability of this approach has led
to consider in the last years, resource theories for several quantum
features such as frame alignment \cite{Gour2008}, stabilizer computation
\cite{Veitch2014}, nonlocality \cite{DeVicente2014} or steering
\cite{Gallego2015}. In such theories one considers a set of free
states and of free operations. The latter constitutes the set of transformations
that the physical setting allows to implement. Free states must be
mapped to free states under all free operations and they are useless
in this physical setting (it is usually assumed that they can be prepared
at no cost). With this, non-free states can be regarded as resource
states: they allow to overcome the limitations imposed by state manipulation
under the set of free operations. Furthermore, free operations then
provide all possible protocols to manipulate the resource and induce
the most natural ordering among states since the resource cannot increase
under this set of transformations. This allows to rigorously construct
resource measures: these quantifiers must not increase under free
operations. For instance, in entanglement theory the set of free operations
is local (quantum) operations and classical communication (LOCC) while
free states are separable states (entangled states are then the resource
states) and the basic principle behind entanglement measures is that
they must not increase under LOCC \cite{Plenio2007,HorodeckiRMP09}.
Recent literature has set the first steps to build a resource theory
of coherence \cite{Aberg2006,Gour2008,Gour2009,Levi14,Marvian2013,Marvian2014,Baumgratz2014,Winter2015},
which has allowed to study the role of coherence in quantum theory
\cite{Chen2015,Cheng2015,Hu2015,Mondal2015,Deb2015,Du2015b,Bai2015,Bera2015,Liu2015,Xi2014,Zou2014,Prillwitz2014,Karpat2014,Cakmak2015,Du2015a,Hillery2015,Yadin2015a,Bu2015b,Matera2015,Chitambar2015,Chitambar2015a,Ma2015,Streltsov2016},
its dynamics under noisy evolution \cite{Bromley2015,Singh2015b,Mani2015,Singh2015,Chanda2015,Bu2015,Singh2015a,Zhang2015,Allegra2015,Garcia-Diaz2015,Silva2015,Peng2015},
and to obtain new coherence measures \cite{Girolami2014,Du2015,Streltsov2015b,Killoran2015,Girolami2015,Yuan2015,Yao2015,Pires2015,Qi2015,Xu2015,Yadin2015,Rana2016,Zhang2015b,Chandrashekar2016,Chen2016,Chitambar2016,Streltsov2015d,Napoli2016,Piani2016}.

However, there is an ongoing intense debate on how this theory should
be exactly formulated with several alternatives being considered \cite{Aberg2006,Gour2009,Marvian2014a,Marvian2015,Winter2015,Yadin2015a,Chitambar2016,Marvian2016}.
Notice that in the case of entanglement the physical setting clearly
identifies the set of allowed operations: the parties, who might be
spatially separated, can only act locally. However, while in the case
of coherence it is clear that free states should correspond to incoherent
states (see below for definitions), the physical setting does not
impose any clear restriction on what free operations should be. Thus,
any set of operations that map incoherent states to incoherent states
might qualify in principle as a good candidate. It is therefore fundamental
to identify a set of free operations based on a clear physical interpretation.
In this paper, we analyze in detail two such sets and we thoroughly
study the possibilities and limitations of the emergent resource theories
for state manipulation. This allows us to clarify what the ultimate
resource theory of coherence should be.

In the framework of coherence, the physical setting identifies a particular
set of basis states as classical. A state on $H\simeq\mathbb{C}^{d}$
is called incoherent if it is diagonal in the fixed aforementioned
basis $\{|i\rangle\}$ ($i=1,\ldots,d$) and otherwise coherent. In
the standard resource theory of coherence of Baumgratz \emph{et al.}
\cite{Baumgratz2014}, free operations are given by the so-called
incoherent operations. These correspond to those maps that admit an
incoherent Kraus decomposition:
\begin{equation}
\Lambda_{\mathrm{i}}[\rho]=\sum_{i}K_{i}\rho K_{i}^{\dagger},\label{incoherentoperations}
\end{equation}
where, besides the normalization condition $\sum_{i}K_{i}^{\dagger}K_{i}=\openone$,
the Kraus operators fulfill that the unnormalized states $K_{i}\rho K_{i}^{\dagger}$
remain incoherent for every $i$ if $\rho$ is. On an experimental
level, this definition means that a quantum operation \emph{can} be
implemented in an incoherent way. By performing the measurements given
by the aforementioned Kraus operators, no coherence can be created
from an incoherent state even if we allow for postselection. However,
in general there are many inequivalent experimental realizations leading
to the same quantum operation. This also means that a quantum operation
which is incoherent in some Kraus decomposition can still create coherence
in another \cite{Streltsov2015d}. Even worse, there are examples
of operations which, while being incoherent in one experimental realization,
create maximal coherence in another \cite{Streltsov2015d}. A similar
observation can be made for LOCC maps in entanglement theory. Notwithstanding,
in this case it is clear that this is irrelevant since the physical
setting only cares for the existence of a local implementation. However,
in the case of coherence it is not clear whether allowed transformations
should be defined at the level of maps or of particular implementations.
It seems therefore necessary to study the full potential of a resource
theory of coherence built on the constraint that the allowed set of
operations are incoherent maps.

The framework of genuine coherence was introduced with the aim to
rule out this sort of ``hidden coherence'' \cite{Streltsov2015d}.
The main ingredient of this framework is the definition of genuinely
incoherent (GI) operations. These operations preserve all incoherent
states:
\begin{equation}
\Lambda_{\mathrm{gi}}[\rho_{\mathrm{i}}]=\rho_{\mathrm{i}}
\end{equation}
for any incoherent state $\rho_{\mathrm{i}}$. GI operations are particularly
meaningful from the physical point of view in scenarios where the
classical states are not interchangeable, e.\ g.\ when they have
different energies. This also implies that the framework of genuine
coherence falls into the same class as the resource theory of asymmetry
\cite{Gour2009,Marvian2014a,Marvian2015} and characterizes \emph{unspeakable}
coherence \cite{Marvian2016}, while the framework of Baumgratz \emph{et
al.} \cite{Baumgratz2014} is a representative of \emph{speakable}
coherence. As was shown in \cite{Streltsov2015d}, the Kraus operators
of GI operations are all diagonal in the incoherent basis. Moreover,
the latter property does not depend on a particular Kraus decomposition,
and thus any Kraus decomposition of such an operation is indeed incoherent.
This means that a genuinely incoherent quantum operation cannot create
coherence, regardless of its particular experimental realization.
Reference \cite{Streltsov2015d} uses this framework as a starting
point for a resource theory of genuine coherence and provides measures
for such resource. Interestingly, results therein (cf.\ Theorem 6)
seem to indicate that this theory might have limitations for deterministic
state manipulation.

In this paper we analyze in detail the mathematical structure of GI
operations. This allows us to study extensively their possibilities
and limitations for state manipulation. We show that deterministic
transformations among pure states are impossible in this framework.
However, all mixed states can be obtained from some pure state and
we characterize this kind of transformations. Moreover, we study stochastic
transformations as well, which have a much richer structure, and we
obtain the maximal probability of transformation among arbitrary pure
states. We also consider the asymptotic setting, which plays a key
role in resource theories, and show that any general form of distillation
and dilution is impossible. Motivated by these limitations we introduce
and study the set of fully incoherent (FI) operations. This is defined
as all operations in which all Kraus decompositions are incoherent,
i.\ e. they cannot create coherence in any experimental implementation.
We characterize mathematically this kind of operations, which allows
us to prove that this set is a strict superset of GI operations and
to study its potential for state manipulation. We show that in this
case deterministic transformations among pure coherent states are
indeed possible. However, we provide several results proving that
state manipulation under FI operations is still rather contrived outside
the stochastic case. Finally, we provide a discussion on the relation
of the operations presented in this work to other alternative incoherent
operations presented in the literature.

\section{Genuinely incoherent operations}

\subsection{\label{sec:Schur}Relation to Schur operations}

The results presented in this section are strongly based on the framework
of \emph{Schur operations}. A quantum operation $\Lambda$ acting
on a Hilbert space of dimension $d$ is called a Schur operation if
there exists a $d\times d$ matrix $A$ such that
\begin{equation}
\Lambda[\rho]=A\odot\rho.
\end{equation}
Here, $\odot$ denotes the Schur or Hadamard product, i.\ e.\ entry-wise
product for two matrices of the same dimension:
\begin{equation}
(X\odot Y)_{ij}=X_{ij}Y_{ij}.
\end{equation}

The fact that $\Lambda$ is a quantum channel -- and thus a trace
preserving completely positive map -- adds additional constraints
on the matrix $A$ \cite{Watrous2011}:
\begin{itemize}
\item $A$ must be positive semidefinite (PSD),
\item the diagonal elements of $A$ must be $A_{ii}=1$.
\end{itemize}
The following theorem provides an important relation between Schur
operations and GI operations that will be used throughout this section.
\begin{thm}
\label{thm:Schur}The following statements are equivalent:
\begin{enumerate}
\item $\Lambda$ is a genuinely incoherent quantum operation, i.\ e.\ $\Lambda[\rho]=\rho$
for every incoherent state $\rho$.
\item Any Kraus representation of $\Lambda$ as in Eq.\ (\ref{incoherentoperations})
has all Kraus operators $\{K_{i}\}$ diagonal.
\item $\Lambda$ can be written as
\begin{equation}
\Lambda[\rho]=A\odot\rho\label{eq:Schur}
\end{equation}
with a PSD matrix $A$ such that $A_{ii}=1$.
\end{enumerate}
\end{thm}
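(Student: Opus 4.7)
The plan is to establish the cycle of implications $(3)\Rightarrow(1)\Rightarrow(2)\Rightarrow(3)$, starting with the easy directions and leaving the one carrying the real content for last.

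The implication $(3)\Rightarrow(1)$ is immediate. For any diagonal $\rho$, the Hadamard product $A\odot\rho$ has zero off-diagonal entries and diagonal entries $A_{ii}\rho_{ii}=\rho_{ii}$, so $\Lambda[\rho]=\rho$.

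The implication $(2)\Rightarrow(3)$ is a direct computation. Writing each diagonal Kraus operator as $K_k=\sum_i c_{k,i}|i\rangle\langle i|$, one obtains
\[
\Lambda[\rho]=\sum_{i,j}\Big(\sum_k c_{k,i}c_{k,j}^*\Big)\rho_{ij}\,|i\rangle\langle j|,
\]
which is precisely $\Lambda[\rho]=A\odot\rho$ with $A_{ij}:=\sum_k c_{k,i}c_{k,j}^*$. This $A$ is a Gram matrix of the column vectors $(c_{k,i})_k$ and is therefore PSD, while the trace-preservation condition $\sum_k K_k^\dagger K_k=\mathbb{I}$ translates directly into $A_{ii}=\sum_k|c_{k,i}|^2=1$.

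The crux is $(1)\Rightarrow(2)$. Fix an \emph{arbitrary} Kraus decomposition $\{K_k\}$ of $\Lambda$ and apply it to the pure incoherent state $|i\rangle\langle i|$; invariance yields
\[
\sum_k K_k|i\rangle\langle i|K_k^\dagger = |i\rangle\langle i|.
\]
Every summand on the left is PSD and the right-hand side has rank one, so each summand must itself be a non-negative scalar multiple of $|i\rangle\langle i|$. This forces $K_k|i\rangle=c_{k,i}|i\rangle$ for some complex scalars $c_{k,i}$, and running the argument over all basis indices $i$ shows that every $K_k$ is diagonal in the incoherent basis. The main obstacle is exactly this step: the diagonal structure must be established for \emph{every} Kraus decomposition, not merely some convenient one, and this is what distinguishes GI operations from incoherent operations in the sense of Baumgratz et al. Fortunately, the rank-one argument above never chooses a specific representation, so the conclusion is uniform. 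The remaining technical points---the Gram-matrix/PSD verification in $(2)\Rightarrow(3)$ and the Hadamard-product computation in $(3)\Rightarrow(1)$---are entirely routine.
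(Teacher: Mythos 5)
Your proof is correct, but it takes a more self-contained route than the paper. The paper proves $3\Rightarrow 1$ by inspection (as you do), but outsources the equivalence $1\Leftrightarrow 2$ to Ref.~\cite{Streltsov2015d} and the implication $1\Rightarrow 3$ to Theorem 4.19 of \cite{Watrous2011}. You instead close the cycle $(3)\Rightarrow(1)\Rightarrow(2)\Rightarrow(3)$ from first principles: the step $(1)\Rightarrow(2)$ via the observation that a sum of PSD operators equaling the rank-one projector $|i\rangle\langle i|$ forces each summand $(K_k|i\rangle)(K_k|i\rangle)^\dagger$ to be supported on $\mathrm{span}\{|i\rangle\}$, hence $K_k|i\rangle\propto|i\rangle$ for every $k$ and every $i$; and the step $(2)\Rightarrow(3)$ via the explicit Gram-matrix computation $A_{ij}=\sum_k c_{k,i}c_{k,j}^{*}$, with $A_{ii}=1$ coming from trace preservation. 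Both of these are sound (the rank-one support argument correctly applies to an \emph{arbitrary} Kraus decomposition, which is exactly what statement 2 requires), and the Gram construction is essentially the content of the cited Watrous result specialized to this setting. What your approach buys is a fully elementary, reference-free proof; what the paper's approach buys is brevity by leaning on standard results about Schur multipliers and on the earlier characterization of GI operations.
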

\begin{proof}
The equivalence of 1 and 2 was shown in \cite{Streltsov2015d}. Moreover,
it is straightforward to verify by inspection that any operation defined
as in Eq.~(\ref{eq:Schur}) is indeed genuinely incoherent, i.e.,
it preserves all incoherent states. This implies that $3\Rightarrow1$.
It remains to prove that for any GI operation $\Lambda$ there exists
a matrix $A$ such that Eq.~(\ref{eq:Schur}) holds true. This is
a direct consequence of Theorem 4.19 in \cite{Watrous2011}.
\end{proof}
This theorem provides a simple characterization of genuinely incoherent
operations, and will be used for proving several statements in this
section. As we will see in the next subsection, the theorem also generalizes
to nondeterministic transformations. Furthermore, this characterization
allows to study in great detail the mathematical structure of the
set of GI maps and to obtain answers to several questions left open
in \cite{Streltsov2015d}. However, these properties are not needed
to establish the different results on state manipulation that we present
in Sec.\ \ref{GImanipulationsec} and that constitute the main aim
of this paper. For this reason we refer the reader interested in other
properties of GI maps to Appendix \ref{sec:MathematicalStructure}.

\subsection{Stochastic genuinely incoherent operations }

We will now consider trace non-increasing genuinely incoherent operations,
i.\ e.\ transformations of the form
\begin{equation}
\Lambda_{\mathrm{sgi}}[\rho]=\sum_{i}K_{i}\rho K_{i}^{\dagger},
\end{equation}
where the Kraus operators $K_{i}$ are all diagonal in the incoherent
basis but do not need to form a complete set, i.e., $\sum_{i}K_{i}^{\dagger}K_{i}\leq\openone$.
This means that the process is not deterministic, but occurs with
probability given by $p=\mathrm{Tr}[\sum_{i}K_{i}\rho K_{i}^{\dagger}]$.
We will call such a map \textbf{\emph{s}}\emph{tochastic }\textbf{\emph{g}}\emph{enuinely
}\textbf{\emph{i}}\emph{ncoherent} \emph{(SGI)} operation. It is important
to note that any SGI operation can be completed to a genuinely incoherent
operation by another SGI operation. Thus, a transformation among two
states can be implemented with some non-zero probability of success
if and only if there exists an SGI operation connecting them (up to
normalization). The following theorem generalizes Theorem \ref{thm:Schur}
to the stochastic scenario.
\begin{thm}
A quantum operation $\Lambda$ is SGI if and only if it can be written
as
\[
\Lambda[\rho]=A\odot\rho
\]
with a PSD matrix $A$ such that $0\leq A_{ii}\leq1$. \end{thm}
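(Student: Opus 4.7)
The plan is to prove the equivalence by two direct implications, following the same template as Theorem~\ref{thm:Schur} but carefully tracking how the weaker sub-normalization condition $\sum_{i}K_{i}^{\dagger}K_{i}\leq\openone$ propagates through the Schur-product representation.

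For the forward direction, I would start from an arbitrary SGI decomposition $\Lambda[\rho]=\sum_{i}K_{i}\rho K_{i}^{\dagger}$ in which every $K_{i}$ is diagonal, and write $K_{i}=\sum_{j}k_{ij}|j\rangle\langle j|$. A direct computation of matrix elements yields $(\Lambda[\rho])_{jk}=\bigl(\sum_{i}k_{ij}\overline{k_{ik}}\bigr)\rho_{jk}$, which is precisely the Schur form $A\odot\rho$ with $A_{jk}=\sum_{i}k_{ij}\overline{k_{ik}}$. Recognizing $A=\sum_{i}\vec{k}_{i}\vec{k}_{i}^{\dagger}$, where $\vec{k}_{i}$ is the column vector of diagonal entries of $K_{i}$, displays $A$ as a sum of rank-one PSD matrices and hence $A\succeq 0$. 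Finally, since each $K_{i}^{\dagger}K_{i}$ is diagonal with $(j,j)$-entry $|k_{ij}|^{2}$, the operator inequality $\sum_{i}K_{i}^{\dagger}K_{i}\leq\openone$ collapses to the scalar bounds $A_{jj}=\sum_{i}|k_{ij}|^{2}\leq 1$ for every $j$, while $A_{jj}\geq 0$ is automatic from positive semidefiniteness.

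For the converse, given $A\succeq 0$ with $0\leq A_{ii}\leq 1$, I would take any PSD factorization $A=\sum_{i}\vec{v}_{i}\vec{v}_{i}^{\dagger}$ (for instance, from the spectral decomposition of $A$) and define diagonal Kraus operators $K_{i}=\mathrm{diag}(\vec{v}_{i})$. Reversing the computation above shows $\sum_{i}K_{i}\rho K_{i}^{\dagger}=A\odot\rho$, and $\sum_{i}K_{i}^{\dagger}K_{i}=\mathrm{diag}(A_{11},\ldots,A_{dd})\leq\openone$ follows directly from the diagonal bound, so the resulting map is SGI by construction.

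I do not anticipate a serious obstacle: the theorem is the natural stochastic analogue of Theorem~\ref{thm:Schur}, and the new ingredient, the inequality $A_{ii}\leq 1$, corresponds transparently to sub-normalization. A conceptually cleaner alternative for the converse, consistent with the remark that any SGI map can be completed to a GI map, is to set $A'=A+\mathrm{diag}(1-A_{11},\ldots,1-A_{dd})$; this $A'$ is PSD with unit diagonal and therefore defines a GI channel by Theorem~\ref{thm:Schur}, and selecting the Kraus operators arising from the $A$-summand in any decomposition $A'=A+D$ with $D$ the correction diagonal yields the desired SGI decomposition of $\Lambda$.
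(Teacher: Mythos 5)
Your proof is correct and takes essentially the same route as the paper's: the paper delegates the core equivalence (Schur maps with PSD $A$ correspond exactly to maps with diagonal Kraus operators) to Proposition 4.17 and Theorem 4.19 of Watrous's notes and then only verifies that sub-normalization is equivalent to $0\leq A_{ii}\leq 1$, whereas you unpack that citation into the explicit Gram-matrix computation $A_{jk}=\sum_{i}k_{ij}\overline{k_{ik}}$ and read off both positivity and the diagonal bound from it. The mathematics is identical; your version is simply self-contained, and your alternative completion argument via $A'=A+\mathrm{diag}(1-A_{11},\ldots,1-A_{dd})$ matches the paper's remark that every SGI map extends to a GI channel.
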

\begin{proof}
Proposition 4.17 and theorem 4.19 in \cite{Watrous2011} establish
the equivalence of Schur maps with a PSD matrix $A$ and maps with
diagonal Kraus operators independently of whether the maps are trace-preserving
or not. SGI maps correspond to the case of trace non-increasing maps.
Thus, it only remains to check that this condition is fulfilled if
and only if $0\leq A_{ii}\leq1$ $\forall i$. First of all it must
hold that $A_{ii}\geq0$ $\forall i$ in order for the matrix to be
PSD. Then, on the one hand, $A_{ii}\leq1$ $\forall i$ is clearly
sufficient for the Schur map to be trace non-increasing. On the other
hand, looking at the action of the map on the states $\{|i\rangle\langle i|\}$
the bound is also found to be necessary.
\end{proof}
The power of the above theorem lies in the fact that it gives a simple
characterization of all SGI operations. It provides the basis for
many results presented in this section.

\subsection{\label{coherenceranksec}Coherence rank and coherence set}

In entanglement theory \cite{HorodeckiRMP09}, local unitaries are
invertible local operations, and states related by local unitaries
have the same amount of entanglement. Thus, for most problems concerning
bipartite pure-state entanglement, it is sufficient to consider the
Schmidt coefficients of the corresponding states.

In direct analogy, we notice that diagonal unitaries are invertible
GI operations. Hence, for any measure of genuine coherence, states
related by diagonal unitaries are equally coherent. Thus, without
loss of generality, we can restrict our considerations to pure states
\begin{equation}
|\psi\rangle=\sum_{i}\psi_{i}|i\rangle
\end{equation}
such that $\psi_{i}\geq0$. Obviously, a pure state is incoherent
if and only if $\psi_{i}=1$ for some $i$.

In analogy to the Schmidt rank in entanglement theory~\cite{HorodeckiRMP09},
one can define the \emph{coherence rank} of a pure state $r(\ket{\psi})$
as the number of basis elements for which $\psi_{i}\neq0$~\cite{Killoran2015}.
The coherence rank, like its analogous in entanglement theory, provides
useful information about the coherence content of a state and constrains
the possible transformations among resource states. For instance,
the coherence rank cannot increase under incoherent operations \cite{Winter2015}.
One particularity of genuine coherence is, as we will see later, that
it is not only relevant the coherence rank but also for which basis
elements a state has zero components. Indeed, diagonal unitaries do
not allow to permute basis elements and by its very definition an
incoherent state cannot be transformed by GI operations into a different
incoherent state.
\begin{defn}
\label{def:coherence-set}The \emph{coherence set }$R(\psi)$ of $|\psi\rangle=\sum_{i=1}^{d}\psi_{i}|i\rangle$
denotes the subset of $\{1,2,\ldots,d\}$ for which $\psi_{i}\neq0$.
\end{defn}
Thus, unless otherwise stated, in the following when we start with
a state $|\psi\rangle=\sum_{i}\psi_{i}|i\rangle$ it should be assumed
that all sums go over the elements of $R(\psi)$. Finally, we will
always use the notation $\rho_{\psi}$ for the density matrix corresponding
to the pure state $|\psi\rangle$ (i.\ e.\ $\rho_{\psi}=|\psi\rangle\langle\psi|$).

With this, we have introduced all the machinery that we need to address
the main aim of this section. We proceed now by studying how GI operations
can be used to transform different quantum states into each other.

\subsection{State manipulation under GI operations}

\label{GImanipulationsec}

\subsubsection{Single-state transformations}

In the standard resource theory of coherence the state
\begin{equation}
|+_{d}\rangle=\frac{1}{\sqrt{d}}\sum_{i=1}^{d}|i\rangle\label{maxcoh}
\end{equation}
represents the golden unit: it can be transformed into any other state
on $\mathbb{C}^{d}$ via incoherent operations \cite{Baumgratz2014}.
It has been shown in \cite{Streltsov2015d} that this is no longer
the case for genuine coherence: via GI operations the state $|+_{2}\rangle$
cannot be deterministically converted into a different single-qubit
state $\ket{\psi}$ with $|\bk{\psi|0}|^{2}\neq1/2$. As we will show
in the following theorem, the situation is much more drastic.
\begin{thm}
\label{noconv} A pure state $\ket{\psi}$ can be deterministically
transformed into another pure state $\ket{\phi}$ via GI operations
if and only if $\ket{\phi}=U\ket{\psi}$ with a genuinely incoherent
unitary $U$.\end{thm}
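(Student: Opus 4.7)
The plan is to use Theorem \ref{thm:Schur} to reduce the claim to an elementary observation about diagonal entries of density matrices. The ``if'' direction is essentially immediate: a genuinely incoherent unitary $U$ is diagonal in the incoherent basis, its conjugation $\rho\mapsto U\rho U^\dagger$ is a valid GI channel (one Kraus operator, diagonal), and it produces $\rho_\phi$ from $\rho_\psi$ by assumption. So the content is in the ``only if'' direction.

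For the ``only if'' direction, I would suppose $\Lambda[\rho_\psi]=\rho_\phi$ for some GI operation $\Lambda$ and invoke Theorem \ref{thm:Schur} to write $\Lambda[\rho]=A\odot\rho$ with $A$ PSD and $A_{ii}=1$. Looking at the diagonal entries of the equation $A\odot\rho_\psi=\rho_\phi$ one obtains
\begin{equation}
|\phi_i|^2=A_{ii}|\psi_i|^2=|\psi_i|^2\quad\forall i,
\end{equation}
so in particular $R(\phi)=R(\psi)$ and $|\phi_i|=|\psi_i|$ componentwise. Using the convention from Sec.\ \ref{coherenceranksec} that $\psi_i\ge 0$, define $e^{i\theta_i}:=\phi_i/\psi_i$ for $i\in R(\psi)$ and pick $\theta_i$ arbitrarily otherwise; then $U:=\sum_i e^{i\theta_i}\proj{i}$ is a diagonal unitary, hence a GI unitary, and $\ket{\phi}=U\ket{\psi}$.

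In this proof there is no real obstacle: Theorem \ref{thm:Schur} does all the heavy lifting by forcing $A_{ii}=1$, which trivially pins down $|\phi_i|$. What might look like the main concern, namely the consistency of the off-diagonal entries (one could worry that $A_{ij}\psi_i\psi_j=\phi_i\phi_j^*$ imposes additional constraints), is actually automatic: once $|\phi_i|=|\psi_i|$ we have $\phi_i\phi_j^*=e^{i(\theta_i-\theta_j)}\psi_i\psi_j$, so the off-diagonal Schur equations merely determine $A_{ij}=e^{i(\theta_i-\theta_j)}$ on $R(\psi)\times R(\psi)$, which is PSD there (a rank-one Gram matrix). We never even need to use those off-diagonal relations in the argument; the statement of the theorem follows from the diagonal ones alone.
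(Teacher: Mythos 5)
Your proof is correct and follows essentially the same route as the paper: both invoke Theorem \ref{thm:Schur} to see that a GI operation preserves all diagonal entries, deduce $|\phi_i|^2=|\psi_i|^2$ for the pure states, and conclude that $\ket{\phi}$ and $\ket{\psi}$ differ only by a diagonal (hence genuinely incoherent) unitary. Your extra remarks, namely spelling out the explicit phases $e^{i\theta_i}=\phi_i/\psi_i$ and checking that the off-diagonal Schur constraints are automatically consistent, are sound but not needed beyond what the paper already does.
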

\begin{proof}
From Theorem \ref{thm:Schur} it follows that for any GI operation
$\Lambda$ the states $\rho$ and $\Lambda[\rho]$ have the same diagonal
elements, i.e., $ $$\bk{i|\rho|i}=\bk{i|\Lambda[\rho]|i}$. For pure
states $\rho_{\psi}$ and $\rho_{\phi}=\Lambda[\rho_{\psi}]$ this
means that $|\bk{i|\psi}|^{2}=|\bk{i|\phi}|^{2}$, and thus the states
are the same up to a unitary $U$ which is diagonal in the incoherent
basis.
\end{proof}
This theorem shows that deterministic GI transformations among pure
states are trivial. We can only use invertible operations to transform
states within the classes of equally coherent states. This resembles
the case of multipartite entangled states where almost no pure state
can be transformed to any other state outside its respective equivalence
class \cite{DeVicente2013}.

We have seen that the framework of genuine coherence does not have
a golden unit, i.e., there is no unique state from which all other
states can be prepared via GI operations. However, it is still possible
that for every mixed state $\rho$ there exists some pure state $\ket{\psi}$
from which $\rho$ can potentially be created via GI operations. In
the following theorem we will show that this is indeed the case.
\begin{thm}
\label{convmix} For every mixed state $\rho$ there exists a pure
state $\rho_{\psi}$ and a GI operation $\Lambda$ such that
\begin{equation}
\rho=\Lambda[\rho_{\psi}].
\end{equation}
\end{thm}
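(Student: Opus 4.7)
The plan is to use Theorem~\ref{thm:Schur} to reduce the problem to constructing an appropriate Schur matrix. By that theorem, exhibiting a GI operation $\Lambda$ with $\Lambda[\rho_\psi]=\rho$ amounts to finding a PSD matrix $A$ with $A_{ii}=1$ and a pure state $|\psi\rangle$ such that $\rho = A\odot\rho_\psi$. Since the Schur product acts entrywise and $(\rho_\psi)_{ij}=\psi_i\psi_j^*$, matching the diagonals forces $|\psi_i|^2=\rho_{ii}$; by the remarks in Sec.~\ref{coherenceranksec} we may take $\psi_i=\sqrt{\rho_{ii}}\geq 0$.

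For the off-diagonal part, first assume that every $\rho_{ii}>0$ (i.e.\ $|\psi\rangle$ has full coherence rank). Then the natural guess is $A_{ij}=\rho_{ij}/\sqrt{\rho_{ii}\rho_{jj}}$, which can be written compactly as
\begin{equation}
A = D^{-1/2}\rho\,D^{-1/2},\qquad D=\mathrm{diag}(\rho_{11},\ldots,\rho_{dd}).
\end{equation}
This matrix manifestly has $A_{ii}=1$ and is PSD as a congruence transform of the PSD matrix $\rho$. A direct check gives $A_{ij}(\rho_\psi)_{ij}=\rho_{ij}$, so $A\odot\rho_\psi=\rho$ as required.

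The only subtlety, which I expect to be the main (but mild) obstacle, is dealing with the case where some diagonal entries vanish. If $\rho_{ii}=0$ for $i$ outside a subset $S\subseteq\{1,\ldots,d\}$, the PSD property of $\rho$ forces $\rho_{ij}=0$ whenever $i\notin S$ or $j\notin S$ (Cauchy--Schwarz: $|\rho_{ij}|^2\leq\rho_{ii}\rho_{jj}$). I would then set $\psi_i=\sqrt{\rho_{ii}}$ for $i\in S$ and $\psi_i=0$ otherwise, and define $A$ as the block-diagonal matrix that equals $D_S^{-1/2}\rho_S D_S^{-1/2}$ on the $S$-block and the identity on the complementary block. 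This $A$ is PSD (a direct sum of PSD matrices) and has unit diagonal. On the $S$-block, $A\odot\rho_\psi$ reproduces the nonzero entries of $\rho$ exactly as in the full-rank case; on the complement, $(\rho_\psi)_{ij}=0$ so the Schur product vanishes there, matching $\rho$.

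Thus, for any mixed state $\rho$, the pure state $|\psi\rangle=\sum_{i\in S}\sqrt{\rho_{ii}}|i\rangle$ together with the GI operation $\Lambda[\sigma]=A\odot\sigma$ implements the desired transformation, completing the proof.
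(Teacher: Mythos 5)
Your proposal is correct and is essentially the paper's own proof: you pick the same pure state $\psi_i=\sqrt{\rho_{ii}}$ and the same Schur matrix $A_{ij}=\rho_{ij}/\sqrt{\rho_{ii}\rho_{jj}}$ (the paper writes it as $\rho\odot\rho_{\tilde\psi}$ and certifies positivity via the Schur product theorem, whereas you use the congruence $D^{-1/2}\rho D^{-1/2}$ -- the same matrix either way). Your explicit block-diagonal treatment of vanishing diagonal entries is a nice touch, since the paper dismisses that case with the one-line remark that one may assume $\rho_{ii}\neq 0$ for all $i$.
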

\begin{proof}
Let $\rho=\sum_{ij}\rho_{ij}|i\rangle\langle j|$ be an arbitrary
mixed state. Since $\rho$ is PSD, we can assume that $\rho_{ii}\neq0$
for all $i$. We will now provide a pure state $\ket{\psi}$ and a
GI operation $\Lambda$ such that $\Lambda(\rho_{\psi})=\rho$. As
we will prove in the following, the desired state and GI operation
are given as
\begin{align}
|\psi\rangle & =\sum_{i}\sqrt{\rho_{ii}}|i\rangle,\\
\Lambda[X] & =\rho\odot\rho_{\tilde{\psi}}\odot X
\end{align}
with the matrix $\rho_{\tilde{\psi}}=\sum_{ij}(\sqrt{\rho_{ii}\rho_{jj}})^{-1}|i\rangle\langle j|$
\footnote{Note that $\rho_{\tilde{\psi}}$ is not normalized, i.e., $\mathrm{Tr}[\rho_{\tilde{\psi}}]\neq1$
in general}.

For proving this, we first note that the matrix $A=\rho\odot\rho_{\tilde{\psi}}$
is PSD, since it is the Schur product of two PSD matrices \cite{Horn1991}.
Notice moreover that $A_{ii}=1$ $\forall i$; hence, using Theorem
\ref{thm:Schur}, $\Lambda[X]=A\odot X$ is a GI operation. Since
$\rho_{\tilde{\psi}}$ is the Schur inverse of $\rho_{\psi}$ we finally
obtain that
\begin{equation}
\Lambda[\rho_{\psi}]=\rho\odot\rho_{\tilde{\psi}}\odot\rho_{\psi}=\rho,
\end{equation}
which is the desired result.
\end{proof}
This theorem shows that in the framework of genuine quantum coherence
the set of all pure states can be regarded as a resource: all mixed
states can be obtained from some pure states via GI operations. Thus,
although there is no maximally genuinely coherent state, there is
a maximal genuinely coherent set in the terminology of \cite{DeVicente2013}.
Moreover, noticing that transformations under GI operations require
that the diagonal entries of the density matrices are preserved, the
theorem further implies the following corollary, which characterizes
all conversions from pure states to mixed states.
\begin{cor}
\label{puretomixedGI} A pure state $|\psi\rangle$ can be deterministically
transformed by GI operations into the mixed state $\rho$ if and only
if $\bk{i|\rho_{\psi}|i}=\bk{i|\rho|i}$.
\end{cor}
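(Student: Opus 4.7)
The plan is to handle the two directions separately, with the forward direction being an immediate consequence of the Schur-form characterization in Theorem~\ref{thm:Schur}, and the backward direction following the template of the proof of Theorem~\ref{convmix} with a small adaptation to accommodate the fact that $|\psi\rangle$ need not be fully supported.

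For the ``only if'' direction, I would note that by Theorem~\ref{thm:Schur} any GI operation acts as $\Lambda[X] = A \odot X$ with a PSD matrix $A$ satisfying $A_{ii} = 1$ for all $i$. Hence $\langle i | \Lambda[X] | i\rangle = A_{ii} X_{ii} = X_{ii}$, so applying this to $X = \rho_\psi$ and using $\rho = \Lambda[\rho_\psi]$ gives $\langle i|\rho|i\rangle = \langle i|\rho_\psi|i\rangle$ as claimed. This direction is essentially a one-line observation.

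For the ``if'' direction, I would explicitly construct a GI operation $\Lambda$ with $\Lambda[\rho_\psi] = \rho$. Without loss of generality, by applying a diagonal (hence GI) unitary, I can assume $\psi_i \geq 0$, so $\psi_i = \sqrt{\rho_{ii}}$ for $i \in R(\psi)$ and $\psi_i = 0$ for $i \notin R(\psi)$. Since the diagonals match and $\rho$ is PSD, $\rho_{ii} = 0$ for $i \notin R(\psi)$ forces the entire $i$-th row and column of $\rho$ to vanish, so $\rho$ is supported on the principal block indexed by $R(\psi)$. I would then mimic the construction used in Theorem~\ref{convmix}, but applied only on $R(\psi)$: define $A_{ij} = \rho_{ij}/(\psi_i \psi_j)$ for $i,j \in R(\psi)$, and $A_{ij} = \delta_{ij}$ otherwise. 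By construction $A_{ii} = 1$, and $A$ is block-diagonal between $R(\psi)$ and its complement. The $R(\psi)$-block is the Schur product of the (PSD) principal submatrix $\rho|_{R(\psi)}$ with the (PSD, rank-one) matrix $\sum_{i,j\in R(\psi)} (\psi_i \psi_j)^{-1} |i\rangle\langle j|$, hence PSD by the Schur product theorem, while the complementary block is the identity. Therefore $A$ is PSD and satisfies the hypotheses of Theorem~\ref{thm:Schur}, so $\Lambda[X] := A \odot X$ is a valid GI operation, and one checks directly that $\Lambda[\rho_\psi] = \rho$ because $A_{ij} (\rho_\psi)_{ij} = \rho_{ij}$ on $R(\psi)\times R(\psi)$ and both sides vanish off this block.

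The only real subtlety, and the step I expect to be the main (mild) obstacle, is handling indices outside $R(\psi)$, since the Schur-inverse construction of Theorem~\ref{convmix} is not defined when some $\psi_i = 0$. The resolution is precisely the PSD argument above: positivity of $\rho$ forces its support to lie in the $R(\psi)$ block, so the off-block entries of $A$ are free and can simply be chosen to make $A$ block-diagonal with identity on the complement, preserving the PSD condition and the unit-diagonal property required by Theorem~\ref{thm:Schur}.
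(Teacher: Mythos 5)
Your proof is correct and follows essentially the same route as the paper: the ``only if'' direction is the diagonal-preservation property immediately read off from the Schur form of Theorem~\ref{thm:Schur}, and the ``if'' direction reuses the Schur-product construction from the proof of Theorem~\ref{convmix}. The only difference is that you spell out explicitly how to handle indices outside $R(\psi)$ via the block-diagonal extension of $A$ (where the paper's Theorem~\ref{convmix} simply assumes $\rho_{ii}\neq 0$ for all $i$), which is a welcome but minor refinement.
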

At this point we also note that mixed states cannot be deterministically
transformed to a pure state. Indeed, let a mixed state have spectral
decomposition $\rho=\sum_{i}\lambda_{i}\rho_{\psi_{i}}$ with $0<\lambda_{i}<1$.
Then, if there existed a GI map $\Lambda$ such that $\Lambda(\rho)=\rho_{\phi}$
for some pure state $\rho_{\phi}$, we would need that $\Lambda(\rho_{\psi_{i}})=\rho_{\phi}$
$\forall i$, which is forbidden by Theorem \ref{noconv} (unless
$|\psi_{i}\rangle=U|\phi\rangle$ for all $i$ for some genuinely
incoherent unitary, which would imply that $\rho$ is pure).

The impossibility of deterministic GI conversions among pure states
calls for the analysis of probabilistic transformations. As explained
above this amounts to the use of SGI operations. In the following
theorem we evaluate the optimal probability for pure state conversion
via SGI operations. In this theorem we will also explicitly use Definition
\ref{def:coherence-set} for the coherence set $R$.
\begin{thm}
\label{thm:sgitrans} A probabilistic transformation by GI operations
from $|\psi\rangle$ to $|\phi\rangle$ is possible if and only if
$R(\phi)\subseteq R(\psi)$. The optimal probability of conversion
is
\begin{equation}
P(\rho_{\psi}\to\rho_{\phi})=\min_{i\in R(\phi)}\frac{\bk{i|\rho_{\psi}|i}}{\bk{i|\rho_{\phi}|i}}.
\end{equation}
\end{thm}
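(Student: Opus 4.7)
The plan is to leverage the Schur characterization of SGI maps established in the preceding theorem: $\Lambda$ is SGI iff $\Lambda[\rho]=A\odot\rho$ for some PSD matrix $A$ with $0\le A_{ii}\le 1$. So asking for a transformation $\rho_\psi\to\rho_\phi$ with probability $p$ amounts to asking for such an $A$ satisfying $A\odot\rho_\psi=p\,\rho_\phi$, and the task becomes finding the largest admissible $p$.

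First I would extract both the support condition and the upper bound by inspecting diagonal entries. For any $i\in R(\phi)$, the $(i,i)$ entry of $A\odot\rho_\psi=p\,\rho_\phi$ reads $A_{ii}\bk{i|\rho_\psi|i}=p\,\bk{i|\rho_\phi|i}$. If $i$ were in $R(\phi)\setminus R(\psi)$, the left-hand side would vanish while the right-hand side does not, forcing $p=0$; this gives the necessity of $R(\phi)\subseteq R(\psi)$. Otherwise, the bound $A_{ii}\le 1$ yields $p\le \bk{i|\rho_\psi|i}/\bk{i|\rho_\phi|i}$, and minimizing over $i\in R(\phi)$ produces the claimed upper bound $p^{*}$.

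For achievability, assume the support condition and let $p^{*}$ denote the minimum on the right-hand side. I would construct $A$ explicitly by setting $A_{ij}=p^{*}\phi_i\phi_j^{*}/(\psi_i\psi_j^{*})$ for $i,j\in R(\phi)$, $A_{ii}=1$ for $i\notin R(\psi)$, and $A_{ij}=0$ in all remaining positions. The $R(\phi)\times R(\phi)$ block is then the rank-one PSD matrix $p^{*}vv^{\dagger}$ with $v_i=\phi_i/\psi_i$, so $A$ is block-diagonal and positive semidefinite; its diagonal entries lie in $[0,1]$ by the very definition of $p^{*}$. An entrywise check then confirms $A\odot\rho_\psi=p^{*}\rho_\phi$: on $R(\phi)\times R(\phi)$ the $\psi$ factors cancel and leave $p^{*}\phi_i\phi_j^{*}$, while outside this block both sides vanish because $\rho_\phi$ is supported there.

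The main obstacle I anticipate is handling the entries outside $R(\phi)$ cleanly, where the equation $A\odot\rho_\psi=p\,\rho_\phi$ leaves $A$ underdetermined and one must simultaneously maintain positive semidefiniteness and the diagonal bounds. The decisive observation that resolves this is simply that $\rho_\phi$ is supported on $R(\phi)\times R(\phi)$, so zeroing out the remaining rows and columns of $A$ (possibly padded by an identity block on $R(\psi)^{c}$) satisfies the equation automatically, and PSD then reduces to checking it block by block on the decoupled pieces.
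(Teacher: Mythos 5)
Your proposal is correct and follows essentially the same route as the paper: both rest on the Schur characterization $\Lambda[\rho]=A\odot\rho$ with $0\le A_{ii}\le 1$, both obtain the optimal probability from the diagonal constraint $A_{ii}\bk{i|\rho_\psi|i}=p\,\bk{i|\rho_\phi|i}\le\bk{i|\rho_\psi|i}$, and your achieving matrix $p^{*}vv^{\dagger}$ with $v_i=\phi_i/\psi_i$ is exactly the paper's $c^{-1}\rho_\phi\odot\rho_{\tilde\psi}$ (you verify positivity by noting it is rank one, the paper by the Schur product theorem). The only cosmetic difference is that you derive the support condition $R(\phi)\subseteq R(\psi)$ from the diagonal of the Schur equation rather than from the diagonality of the Kraus operators.
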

\begin{proof}
Without loss of generality we can write
\begin{eqnarray}
\ket{\psi}=\sum_{i\in R(\psi)}\psi_{i}\ket{i}, & \quad & \ket{\phi}=\sum_{j\in R(\phi)}\phi_{j}\ket{j}.
\end{eqnarray}
We will first show that the condition $R(\phi)\subseteq R(\psi)$
is necessary for a probabilistic transformation. Let $\Lambda(\cdot)=\sum_{i=1}^{n}K_{i}\cdot K_{i}^{\dag}$
be an SGI map such that $\Lambda(\rho_{\psi})\propto\rho_{\phi}$.
Then, it must be that $K_{i}|\psi\rangle\propto|\phi\rangle$ $\forall i$.
Hence, since the Kraus operators are diagonal, if for some index $k$
we have $\psi_{k}=0$ then $\phi_{k}=0$ must be true as well. This
proves that $R(\phi)\subseteq R(\psi)$ is a necessary condition for
probabilistic transformation.

We will now show that for $R(\phi)\subseteq R(\psi)$ there exists
a protocol implementing the transformation with the aforementioned
probability. For this, we additionally define the matrix
\begin{equation}
\rho_{\tilde{\psi}}=\sum_{i,j\in R(\psi)}(\psi_{i}\psi_{j})^{-1}\ket{i}\bra{j}
\end{equation}
and the number $c=\max_{i}(\phi_{i}/\psi_{i})^{2}$. Using again the
Schur product theorem, we have that the matrix $A=c^{-1}\rho_{\phi}\odot\rho_{\tilde{\text{\ensuremath{\psi}}}}$
is PSD with $A_{ii}\leq1$. Hence, there exists an SGI map $\Lambda$
such that $\Lambda(X)=A\odot X$. Moreover, $\Lambda(\rho_{\psi})=c^{-1}\rho_{\phi}$
and $\tr\Lambda(\rho_{\psi})=c^{-1}$. Thus, $|\psi\rangle$ can be
transformed to $|\phi\rangle$ with probability $c^{-1}$ (notice
that $1<c<\infty$).

In the final step, we show that there cannot exist a protocol with
larger probability of success. We do this by contradiction. Suppose
that there exists an SGI map $\Lambda$ with Schur representation
given by the PSD matrix $A$ such that $A\odot\rho_{\psi}=k\rho_{\phi}$
with $k>c^{-1}$. Let $i$ be the index for which $c=(\phi_{i}/\psi_{i})^{2}$.
Since $A_{ii}\psi_{i}^{2}=k\phi_{i}^{2}$, we would have that $A_{ii}=kc>1$,
which is in contradiction with the fact that $\Lambda$ is a trace
non-increasing map.
\end{proof}
Notice that the state $|+_{d}\rangle$ is not maximally genuinely
coherent even under the stochastic point of view. Indeed, let $|\chi\rangle$
be the state for which $\{\chi_{i}^{2}\}$ give rise to $\{1/2,(2(d-1))^{-1},\ldots,(2(d-1))^{-1}\}$.
Then, for $d>2$, $P(\rho_{\chi}\to\rho_{+})>P(\rho_{+}\to\rho_{\chi})$.
Given the impossibility to relate pure states by deterministic GI
operations, it would be tempting to order the set of pure coherent
states by $|\psi\rangle>|\phi\rangle$ if $P(\rho_{\psi}\to\rho_{\phi})>P(\rho_{\phi}\to\rho_{\psi})$.
Unfortunately, it turns out that such an order would be not well defined.
To see that, consider the state $|\psi\rangle$ with squared components
$\{1/4,5/8,1/8\}$. For $d=3$, we have that $P(\rho_{\chi}\to\rho_{+})>P(\rho_{+}\to\rho_{\chi})$
and $P(\rho_{+}\to\rho_{\psi})>P(\rho_{\psi}\to\rho_{+})$ but $P(\rho_{\psi}\to\rho_{\chi})>P(\rho_{\chi}\to\rho_{\psi})$.
On the other hand, by arguing as in \cite{Vidal2000}, we have that,
fixing a target state, the function $f_{\rho_{\phi}}(\rho_{\psi})=P(\rho_{\psi}\to\rho_{\phi})$
gives a computable genuine coherence monotone for all pure states.
Furthermore, $f_{\rho_{\phi}}(\rho_{\psi})=1$ iff $\ket{\phi}$ and
$\ket{\psi}$ are related via diagonal unitaries. Hence, by changing
the target state we obtain different monotones, each of them being
maximal for a different state in the maximal genuinely coherent set.

Finally, one may wonder whether mixed states can be transformed by
GI operations with some non-zero probability into a pure coherent
state. A simple example is given by $\rho=p|\psi\rangle\langle\psi|+(1-p)|\phi\rangle\langle\phi|$
where the two pure states $\ket{\psi}$ and $\ket{\phi}$ are respectively
supported on the orthogonal subspaces $W=\mathrm{span}\{|i\rangle\}_{i=1}^{n}$
and $W^{\perp}=\mathrm{span}\{|i\rangle\}_{i=n+1}^{d}$. If we denote
by $P_{X}$ the projector onto the subspace $X$, then the GI map
with Kraus operators $K_{1}=P_{W}$ and $K_{2}=P_{W^{\perp}}$ transforms
$\rho$ into $|\psi\rangle$ with probability $p$ and into $|\phi\rangle$
with probability $1-p$. The next theorem shows that this is essentially
the only possibility.
\begin{thm}
Let $W$ denote a subspace spanned by a subset of two elements of
the incoherent basis. Then, a non-pure coherent state $\rho$ can
be transformed by GI operations with non-zero probability to some
pure coherent state if and only if $P_{W}\rho P_{W}$ is a pure coherent
state for some choice of $W$. \end{thm}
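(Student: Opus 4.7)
The plan is to prove the two directions separately. For the ``if'' direction, suppose there exists $W=\mathrm{span}\{|i\rangle,|j\rangle\}$ such that $P_W\rho P_W$ is (proportional to) a pure coherent state. I would then exhibit an explicit SGI operation with first Kraus operator $K_1=P_W$, which is diagonal in the incoherent basis and satisfies $K_1^\dagger K_1=P_W\leq\openone$, so can be completed to a genuinely incoherent channel by adding further diagonal Kraus operators. Applied to $\rho$, the outcome $K_1$ produces $P_W\rho P_W$ with probability $\mathrm{Tr}[P_W\rho P_W]$, which is strictly positive by hypothesis, and the resulting (normalized) state is pure and coherent.

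For the ``only if'' direction, assume there is an SGI operation $\Lambda[\cdot]=\sum_k K_k\cdot K_k^\dagger$ (all $K_k$ diagonal) and a pure coherent state $|\phi\rangle=\sum_i\phi_i|i\rangle$ with $\Lambda[\rho]\propto|\phi\rangle\langle\phi|$. The first step is a standard PSD argument: since $\Lambda[\rho]$ is a sum of PSD matrices of total rank one, for any $|\phi^\perp\rangle$ orthogonal to $|\phi\rangle$ one has $\langle\phi^\perp|K_k\rho K_k^\dagger|\phi^\perp\rangle=0$, forcing each $K_k\rho K_k^\dagger$ to be supported on $\mathrm{span}\{|\phi\rangle\}$ and hence proportional to $|\phi\rangle\langle\phi|$. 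Pick any such diagonal $K=\sum_i k_i|i\rangle\langle i|$ with $K\rho K^\dagger=c|\phi\rangle\langle\phi|$, $c>0$. Equating matrix elements gives $k_i\bar{k}_j\rho_{ij}=c\phi_i\bar{\phi}_j$ for all $i,j$. For indices $i,j\in R(\phi)$ this forces $k_i,k_j\neq 0$ and, setting $i=j$, $\rho_{ii}=c|\phi_i|^2/|k_i|^2>0$. A direct multiplication then yields $|\rho_{ij}|^2=\rho_{ii}\rho_{jj}$, so the $2\times 2$ principal submatrix of $\rho$ indexed by $\{i,j\}$ has vanishing determinant and is therefore rank one with nonzero off-diagonal entry---precisely a pure coherent state on $W=\mathrm{span}\{|i\rangle,|j\rangle\}$. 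Since $|\phi\rangle$ is coherent we have $|R(\phi)|\geq 2$, so such a pair $\{i,j\}$ exists.

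The step I expect to be the main obstacle is the identification of the two-index subspace in the ``only if'' direction: one has to realize that it suffices to reduce to a \emph{single} diagonal Kraus operator (via the rank-one decomposition argument), because once this reduction is made, the constraint $k_i\bar{k}_j\rho_{ij}=c\phi_i\bar{\phi}_j$ trivially delivers $|\rho_{ij}|^2=\rho_{ii}\rho_{jj}$ on any pair inside $R(\phi)$, and the restriction to two-dimensional subspaces in the statement becomes natural rather than a clever guess.
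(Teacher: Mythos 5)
Your proof is correct, and both directions ultimately hinge on the same mechanism as the paper's argument --- forcing the $2\times2$ principal minor $\rho_{ii}\rho_{jj}-|\rho_{ij}|^{2}$ to vanish on some pair of incoherent basis elements --- but you reach it by a genuinely different route. The paper works with the Schur representation $\Lambda[\cdot]=A\odot(\cdot)$ of the SGI map and argues by contradiction: assuming $P_{W}\rho P_{W}$ is never pure coherent, it selects a pair with $\rho_{ii}\rho_{jj}>|\rho_{ij}|^{2}$ and derives $|A_{ij}|^{2}>A_{ii}A_{jj}$, contradicting positive semidefiniteness of the aggregate matrix $A$. You instead stay at the level of individual diagonal Kraus operators: the rank-one-decomposition argument reduces everything to a single $K$ with $K\rho K^{\dagger}\propto\proj{\phi}$, after which the identity $|\rho_{ij}|^{2}=\rho_{ii}\rho_{jj}$ for $i,j\in R(\phi)$ falls out of elementary algebra, and the only positivity you invoke at the end is that of $\rho$ itself rather than of $A$. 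What your version buys is twofold: it avoids the Schur-product machinery entirely (only the diagonality of the Kraus operators is used), and by choosing the witnessing pair $\{i,j\}$ inside $R(\phi)$ you automatically guarantee $\phi_{i}\phi_{j}\neq0$, the condition that legitimizes the final division --- a compatibility between the chosen pair and the support of the target state that the paper's contradiction argument, which picks $i,j$ from properties of $\rho$ alone, leaves implicit. What the paper's version buys is economy: it reuses the Schur characterization already established earlier and dispatches the whole direction with three equations and one Cauchy--Schwarz violation. The ``if'' direction is essentially identical in both treatments (the single Kraus operator $P_{W}$, completed to a channel by $\openone-P_{W}$).
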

\begin{proof}
The ``if'' part of the theorem is immediate since a map with a unique
Kraus operator given by $P_{W}$ is clearly an SGI operation. To prove
the ``only if'' part we will show that if $P_{W}\rho P_{W}$ is
not pure for any possible choice of $W$, then $\rho$ cannot be transformed
by GI operations with non-zero probability into a pure coherent state.
We will proceed by assuming the opposite and arriving at a contradiction.
Suppose that there exists an SGI map $\Lambda$ with Schur representation
given by the PSD matrix $A$ such that $A\odot\rho=k\rho_{\psi}$
with $0<k<1$ and $\rho_{\psi}=\sum_{i,j}\psi_{i}\psi_{j}|i\rangle\langle j|$.
By our premise, the projection of $\rho$ on every 2-dimensional subspace
spanned by two elements of the incoherent basis must be positive definite
(not PSD). This, together with the fact that $\rho$ is coherent implies
that there must exist $i,j$ such that $\rho_{ii},\rho_{jj},\rho_{ij}\neq0$
and $\rho_{ii}\rho_{jj}>|\rho_{ij}|^{2}$. The existence of the SGI
map imposes the following three equations
\begin{align*}
A_{ii}\rho_{ii} & =k\psi_{i}^{2},\\
A_{jj}\rho_{jj} & =k\psi_{j}^{2},\\
A_{ij}\rho_{ij} & =k\psi_{i}\psi_{j},
\end{align*}
which altogether yield
\[
|A_{ij}|^{2}=\frac{A_{ii}A_{jj}\rho_{ii}\rho_{jj}}{|\rho_{ij}|^{2}}>A_{ii}A_{jj}.
\]
However, this implies that $A$ cannot be PSD and, hence, a contradiction.
\end{proof}
Hence, most mixed states cannot be stochastically transformed to any
pure state. Thus, if, as discussed above, we regard pure states as
the most resourceful states over mixed states, it turns out that most
less resourceful states cannot be transformed, even with small probability,
to a resource state in the one-copy regime.

\subsubsection{Multiple-state and multiple-copy transformations}

So far we have just discussed possible transformations acting on a
single copy of a state. However, in quantum information theory it
is standard to find that multiple-state transformations broaden the
possibilities for resource manipulation. An important example of this
are activation phenomena. This means that the transformation $\rho\otimes\sigma\to\rho'\otimes junk$
(or, more generally, $\rho\otimes\sigma\to\tau$ with $\tr_{junk}\tau=\rho'$)
is possible even though it is impossible to implement the conversion
$\rho\to\rho'$. In this case the state $\sigma$ is called an activator.

Another example, and probably the most paradigmatic one, is distillation.
In these protocols one aims at transforming many copies of a less
useful state into less copies of a maximally useful state in the asymptotic
limit of infinitely many available copies. For instance, in entanglement
theory this target state that acts as a golden standard to measure
the usefulness of the resource is the maximally entangled two-qubit
state $|\Phi^{+}\rangle=(|00\rangle+|11\rangle)/\sqrt{2}$.

In general, we say that a state $\sigma$ can be distilled from the
state $\rho$ at rate $0<R\leq1$ if $\rho^{\otimes n}\to\tau$ and
$\tr_{junk}\tau$ is $\varepsilon$-close to $\sigma^{\otimes nR}$
and $\varepsilon\to0$ as $n\to\infty$. As a measure of closeness
we will use the trace norm; that is, it must hold that $\Vert\tr_{junk}\tau-\sigma^{\otimes nR}\Vert\leq\varepsilon$
where $||M||=\tr\sqrt{M^{\dagger}M}$. The optimal rate at which distillation
is possible, i.\ e.\ the supremum of $R$ over all protocols fulfilling
the aforementioned conditions, is a very relevant figure of merit
known as distillable resource and plays a key role for the quantification
of usefulness in resource theories. The reversed protocol, which is
known as dilution, is also an interesting object of study. In this
case one seeks for the optimal rate at which less copies of maximally
useful state can be converted into more copies of a less useful state.
This leads to another figure of merit: the resource-cost. In more
detail, the cost of $\rho$ is the infimum of the rate $R$ over all
protocols with $0<R\leq1$ such that $\sigma^{\otimes nR}\otimes junk$
(where $\sigma$ is a golden unit maximally resourceful state) transforms
$\varepsilon$-close to $\rho^{\otimes n}$ and $\varepsilon\to0$
as $n\to\infty$. The distillable entanglement and entanglement-cost
have been widely studied in entanglement theory \cite{HorodeckiRMP09}
and allow to establish the phenomenon of irreversibility. More recently,
the distillable coherence and coherence-cost have been characterized
and irreversibility has also been identified in this setting \cite{Winter2015}.

In order to discuss multiple-state and multiple-copy manipulation
under GI operations, it should be made clear what the set of allowed
maps is in this setting. If we are allowed to act jointly on $n$
different states each of them acting on the Hilbert space $H\simeq\mathbb{C}^{d}$,
we define the incoherent basis in the total Hilbert space $H^{\otimes n}$
as $\{|i_{1}i_{2}\cdots i_{n}\rangle\}$ ($i_{j}=1,\ldots,d$ $\forall j$),
where $\{|i_{j}\rangle\}$ is the incoherent basis in each Hilbert
space \cite{Bromley2015,Streltsov2015b}. This can be further justified
by the no superactivation postulate (cf.\ Ref.\ \cite{Chitambar2016}).
Thus, joint GI operations should preserve incoherent states in this
basis and they will be characterized by having Kraus operators diagonal
in the joint incoherent basis. By the same reasons as in Section \ref{sec:Schur},
these GI maps will also admit a Schur representation in the joint
incoherent basis.

We are now in the position to state our results on multiple-state
and multiple-copy manipulation under joint GI operations. It turns
out that these protocols are out of reach: activation and any non-trivial
form of distillation and dilution are impossible. This claim is a
consequence of the following lemma.
\begin{lem}
\label{noactivation} For every two states $\rho,\sigma\in\mathcal{L}(\mathbb{C}^{d})$
and every GI map $\Lambda$ acting on $\mathcal{L}(\mathbb{C}^{d}\otimes\mathbb{C}^{d})$
such that $\Lambda(\rho\otimes\sigma)=\tau$, there exists another
GI map $\tilde{\Lambda}$ acting on $\mathcal{L}(\mathbb{C}^{d})$
such that $\tilde{\Lambda}(\rho)=\tau_{1}:=\tr_{2}(\tau)$. \end{lem}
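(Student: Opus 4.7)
The plan is to apply Theorem~\ref{thm:Schur} to the joint incoherent basis $\{|ij\rangle\}$ of $\mathbb{C}^d\otimes\mathbb{C}^d$, as discussed in the paragraph preceding the lemma: this gives a $d^2\times d^2$ PSD matrix $A$ with unit diagonal such that $\Lambda[X]=A\odot X$. From $A$ and $\sigma$ I would then construct a single-system Schur matrix $\tilde{A}$, verify that it is PSD with unit diagonal, and read off $\tilde{\Lambda}[Y]=\tilde{A}\odot Y$ as the desired GI map with $\tilde{\Lambda}(\rho)=\tau_1$.

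The first step is to expand $\tau_1=\tr_2\bigl(A\odot(\rho\otimes\sigma)\bigr)$ in components. Using $(\rho\otimes\sigma)_{(ij),(kl)}=\rho_{ik}\sigma_{jl}$, the partial trace over the second subsystem gives
$$
(\tau_1)_{ik}=\sum_j A_{(ij),(kj)}\,\rho_{ik}\,\sigma_{jj},
$$
so the natural candidate is
$$
\tilde{A}_{ik}:=\sum_j \sigma_{jj}\,A_{(ij),(kj)},
$$
which by construction satisfies $\tau_1=\tilde{A}\odot\rho$. Notice that only the diagonal entries of $\sigma$ enter; the off-diagonal elements of $\sigma$ drop out under the partial trace, which is consistent with the fact that the action of the reduced map should depend on $\sigma$ only through its incoherent part.

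Next I would verify the two Schur-GI properties of $\tilde{A}$. The unit-diagonal condition is immediate, since $A_{(ij),(ij)}=1$ by Theorem~\ref{thm:Schur} and $\tr(\sigma)=1$, giving $\tilde{A}_{ii}=\sum_j \sigma_{jj}=1$. The positive-semidefiniteness is the one step that requires a (brief) argument and is the main technical point of the lemma: for each fixed $j$ the matrix $B^{(j)}_{ik}:=A_{(ij),(kj)}$ is precisely the principal submatrix of $A$ obtained by restricting to those double indices whose second entry equals $j$, hence $B^{(j)}\succeq 0$; since $\sigma_{jj}\geq 0$ for every $j$, the matrix $\tilde{A}=\sum_j \sigma_{jj}\,B^{(j)}$ is a non-negative combination of PSD matrices and is therefore PSD.

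Given these two properties, Theorem~\ref{thm:Schur} applied on $\mathcal{L}(\mathbb{C}^d)$ directly yields that $\tilde{\Lambda}[Y]=\tilde{A}\odot Y$ is a genuinely incoherent operation, and $\tilde{\Lambda}(\rho)=\tilde{A}\odot\rho=\tau_1$ by construction, completing the proof. The only nontrivial step is recognising the diagonal-in-$j$ blocks of $A$ as principal submatrices so that the PSD property of $\tilde{A}$ can be inherited from $A$; everything else reduces to careful bookkeeping with the double-index Schur representation.
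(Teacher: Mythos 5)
Your proof is correct and follows essentially the same route as the paper: you construct the identical reduced Schur matrix $\tilde{A}_{ik}=\sum_j\sigma_{jj}A_{(ij),(kj)}$ from the joint-basis Schur representation guaranteed by Theorem~\ref{thm:Schur}, and the unit-diagonal check is the same. The only (minor) divergence is the positivity step, where the paper writes $\tilde{A}=\tr_2\bigl((\proj{v}\otimes\sigma)\odot A\bigr)$ with $\ket{v}=\sum_i\ket{i}$ and invokes the Schur product theorem plus the fact that partial traces preserve positive semidefiniteness, whereas your decomposition $\tilde{A}=\sum_j\sigma_{jj}B^{(j)}$ into $\sigma_{jj}$-weighted principal submatrices of $A$ reaches the same conclusion somewhat more directly; both arguments are valid.
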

\begin{proof}
By assumption together with Theorem \ref{thm:Schur}, there exists
a PSD matrix A with diagonal entries equal to 1,
\begin{equation}
A=\sum_{ijkl}A_{ik,jl}|ik\rangle\langle jl|\quad(A_{ik,ik}=1\,\forall i,k),
\end{equation}
which induces the GI operation $\Lambda$ such that
\begin{equation}
\tau=\Lambda(\rho\otimes\sigma)=A\odot(\rho\otimes\sigma)=\sum_{ijkl}\rho_{ij}\sigma_{kl}A_{ik,jl}|ik\rangle\langle jl|.
\end{equation}
The state $\tau_{1}$ is obtained by taking the partial trace over
the second subsystem:
\begin{equation}
\tau_{1}=\mathrm{tr}_{2}(\tau)=\sum_{ij}\left(\sum_{k}\sigma_{kk}A_{ik,jk}\right)\rho_{ij}|i\rangle\langle j|=\tilde{A}\odot\rho
\end{equation}
with the matrix $\tilde{A}$ with entries $\tilde{A}_{ij}=\sum_{k}\sigma_{kk}A_{ik,jk}$.
The proof is complete if we can show that the operator $\tilde{A}$
is PSD and that $\tilde{A}_{ii}=1$ holds for all $i$, since in this
case by Theorem \ref{thm:Schur} there must exist a GI operation $\tilde{\Lambda}$
such that $\tilde{\Lambda}(\rho)=\tilde{A}\odot\rho$. It is straightforward
to verify that $\tilde{A}_{ii}=1$ holds true for all $i$. To see
that $\tilde{A}$ is PSD, notice that $\tilde{A}=\tr_{2}(X\odot A)$,
where the operator $X$ can be written as
\begin{equation}
X=\sum_{ijkl}\sigma_{kl}|ik\rangle\langle jl|=\proj{v}\otimes\sigma
\end{equation}
with the (unnormalized) vector $\ket{v}=\sum_{i}\ket{i}$. Thus, $X$
is PSD (and so is $A$ by assumption). Hence, by the Schur product
theorem, $\tilde{A}$ is the partial trace of a PSD matrix and it
must me PSD too.
\end{proof}
The above lemma shows that there cannot be any activation phenomena
in the resource theory of genuine coherence and it will be very useful
in our study of coherence distillation and dilution with GI operations.
In order to analyze the possibility of distillation one first needs
to discuss what the target state is going to be. However, this is
not at all clear under GI operations since, as we pointed out in the
previous subsection, there is no unique state which would allow to
create all other states via GI operations. In the following, we will
show that in general it is not possible to distill the state $\sigma$
from $\rho$ via GI operations if $\sigma$ has more coherence than
$\rho$. Here, we measure the coherence by the relative entropy of
coherence \cite{Baumgratz2014}
\begin{equation}
C_{r}(\rho)=\min_{\sigma\in\mathcal{I}}S(\rho||\sigma)
\end{equation}
with the quantum relative entropy $S(\rho||\sigma)=\mathrm{tr}(\rho\log_{2}\rho)-\mathrm{tr}(\rho\log_{2}\sigma)$
and $\mathcal{I}$ denotes the set of all incoherent states. The relative
entropy of coherence is known to be equal to the distillable coherence
\cite{Winter2015}, and $C_{r}$ is also a faithful genuine coherence
monotone \cite{Streltsov2015d}. We are now in position to prove the
following theorem.
\begin{thm}
Given two states $\rho$ and $\sigma$ with
\begin{equation}
C_{r}(\rho)<C_{r}(\sigma),
\end{equation}
it is not possible to distill $\sigma$ from $\rho$ at any rate $R>0$
via GI operations.\end{thm}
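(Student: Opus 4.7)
The plan is to argue by contradiction: assume a rate $R>0$ is achievable and derive $C_r(\sigma)\leq C_r(\rho)$. The strategy combines an $n$-copy version of Lemma~\ref{noactivation} with three standard features of $C_r$: monotonicity under GI maps (cited in the paper from \cite{Streltsov2015d}), additivity on tensor products, $C_r(\omega^{\otimes k})=k\,C_r(\omega)$, and an asymptotic-continuity bound of Fannes--Audenaert type.

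The first step is to upgrade Lemma~\ref{noactivation} from two subsystems to $n$: if $\Lambda$ is GI on $\mathcal{L}((\mathbb{C}^d)^{\otimes n})$ with $\Lambda(\rho^{\otimes n})=\tau$, then for every $m\leq n$ there exists a GI map $\tilde\Lambda$ on $\mathcal{L}((\mathbb{C}^d)^{\otimes m})$ such that $\tilde\Lambda(\rho^{\otimes m})=\tr_{\mathrm{junk}}\tau$, the partial trace being over the $n-m$ discarded output factors. I would establish this by induction on $n-m$: the proof of Lemma~\ref{noactivation} uses only the Schur product theorem and the explicit formula $\tilde A_{ij}=\sum_k\sigma_{kk}A_{ik,jk}$, neither of which is sensitive to the dimensions involved, so one can apply the lemma with a "first factor" of dimension $d^{n-1}$ and a "second factor" of dimension $d$ to peel off one discarded subsystem at a time, with the Schur matrices remaining PSD with unit diagonal throughout the induction.

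Given this tool, suppose $\sigma$ is distillable from $\rho$ at rate $R>0$, so that there exist GI maps $\Lambda_n$ on $\mathcal{L}((\mathbb{C}^d)^{\otimes n})$ and reduced output states $\omega_n:=\tr_{\mathrm{junk}}\Lambda_n(\rho^{\otimes n})$ with $\|\omega_n-\sigma^{\otimes\lceil nR\rceil}\|\leq\varepsilon_n\to 0$. The generalized lemma then yields GI maps $\tilde\Lambda_n$ with $\tilde\Lambda_n(\rho^{\otimes\lceil nR\rceil})=\omega_n$, so by monotonicity and additivity of $C_r$ we have $C_r(\omega_n)\leq\lceil nR\rceil\,C_r(\rho)$. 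Asymptotic continuity gives $\lceil nR\rceil\,C_r(\sigma)-C_r(\omega_n)\leq\varepsilon_n\lceil nR\rceil\log d+h(\varepsilon_n)$, with $h$ the binary entropy. Combining the two inequalities and dividing by $\lceil nR\rceil$ produces $C_r(\sigma)-C_r(\rho)\leq\varepsilon_n\log d+h(\varepsilon_n)/\lceil nR\rceil$, whose right-hand side vanishes as $n\to\infty$, contradicting $C_r(\rho)<C_r(\sigma)$.

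The main obstacle is the first step, the $n$-partite generalization of the no-activation lemma; once in place, the remainder is a textbook converse-bound argument. A secondary technical point is checking that the Fannes--Audenaert-type term $\varepsilon_n\lceil nR\rceil\log d$, despite growing linearly in $n$, still yields a vanishing per-copy quantity after dividing by $\lceil nR\rceil$, which it manifestly does because $\varepsilon_n\to 0$ and $\log d$ is fixed.
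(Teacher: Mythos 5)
Your proposal is correct and takes essentially the same route as the paper: a proof by contradiction assembled from the no-activation lemma, monotonicity of $C_r$ under GI operations, and asymptotic continuity of $C_r$. The only difference is bookkeeping --- the paper traces the output down to a \emph{single} copy of $\sigma$ (using monotonicity of the trace norm under partial trace) and the input down to a single copy of $\rho$ before invoking continuity, so that additivity of $C_r$ is never needed and the dimension in the continuity bound stays fixed at $d$, whereas you keep $\lceil nR\rceil$ copies and divide through at the end; your explicit induction establishing the $n$-partite version of the no-activation lemma is a detail the paper in fact uses but leaves implicit.
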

\begin{proof}
We will prove the statement by contradiction, assuming that distillation
is possible for some state $\rho$ with $C_{r}(\rho)<C_{r}(\sigma)$.
In particular, this would imply that for $n$ large enough it is possible
to approximate one copy of the state $\sigma$. To be more precise,
for any $\varepsilon>0$ there exists an integer $n$ and a GI operation
$\Lambda$ such that
\begin{equation}
\left\Vert \tr_{n-1}\left(\Lambda\left[\rho^{\otimes n}\right]\right)-\sigma\right\Vert \leq\varepsilon,
\end{equation}
where the partial trace is taken over some subset of $n-1$ copies.

In the next step we use Lemma \ref{noactivation} to note that the
map $\tr_{n-1}\left(\Lambda\left[\rho^{\otimes n}\right]\right)$
can always be written as a GI operation $\tilde{\Lambda}$ acting
on just one copy of $\rho$:
\begin{equation}
\tilde{\Lambda}\left[\rho\right]=\tr_{n-1}\left(\Lambda\left[\rho^{\otimes n}\right]\right).
\end{equation}
Combining the aforementioned arguments, we conclude that for any $\varepsilon>0$
there exists a GI operation $\tilde{\Lambda}$ such that
\begin{equation}
\left\Vert \tilde{\Lambda}\left[\rho\right]-\sigma\right\Vert \leq\varepsilon.
\end{equation}

In the final step we will use the asymptotic continuity of the relative
entropy of coherence (see Lemma~12 in \cite{Winter2015}). It implies
that
\begin{equation}
\left|C_{r}\left(\tilde{\Lambda}\left[\rho\right]\right)-C_{r}\left(\sigma\right)\right|\leq\varepsilon\log_{2}d+2h\left(\frac{\varepsilon}{2}\right)
\end{equation}
with the binary entropy $h(x)=-x\log_{2}x-(1-x)\log_{2}(1-x)$ and
$d$ the fixed dimension of the Hilbert space. Using the fact that
the bound on the right-hand side of this inequality is continuous
for $\varepsilon\in(0,1)$ and that it goes to zero as $\varepsilon\to0$,
we can say that for any $\delta>0$ there exists some GI operation
$\tilde{\Lambda}$ such that
\begin{equation}
\left|C_{r}\left(\tilde{\Lambda}\left[\rho\right]\right)-C_{r}\left(\sigma\right)\right|\leq\delta.\label{eq:continuity}
\end{equation}
On the other hand, the assumption $C_{r}(\rho)<C_{r}(\sigma)$ implies
that there exists some $\delta>0$ such that
\begin{equation}
C_{r}\left(\sigma\right)-C_{r}\left(\rho\right)\geq\delta.
\end{equation}
Recalling that the relative entropy of coherence is a genuine coherence
monotone, i.e., $C_{r}(\tilde{\Lambda}[\rho])\leq C_{r}(\rho)$, we
arrive at the following result:
\begin{equation}
C_{r}\left(\sigma\right)-C_{r}\left(\tilde{\Lambda}[\rho]\right)\geq\delta
\end{equation}
for some $\delta>0$ and any GI operation $\tilde{\Lambda}$. This
is a contradiction to Eq.~(\ref{eq:continuity}), and the proof of
the theorem is complete.
\end{proof}
From the above theorem it follows that it is not possible to distill
the state $\ket{+_{2}}$ from any non-equivalent single-qubit state
$\rho$, since $C_{r}(\rho)<1=C_{r}(\ket{+_{2}})$.

In the final part of this section we address the impossibility of
dilution. Interestingly, it turns out that diluting less copies of
a state into more copies of another is generically impossible independently
of which state is picked as a golden unit.
\begin{thm}
Given any two coherent states $\rho$ and $\sigma$ of the same dimensionality
it is not possible to dilute $\sigma$ to $\rho$ at any rate $R<1$
via GI operations.\end{thm}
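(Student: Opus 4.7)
The plan is to reach a contradiction by combining the Schur-product characterisation of GI operations (Theorem~\ref{thm:Schur}) with the asymptotic distinguishability of i.i.d.\ product states.

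Assume, for contradiction, that $\sigma$ can be diluted to $\rho$ at some rate $R<1$. Then for every $\varepsilon>0$ there exist $n$, an incoherent ancilla $\tau_{J}$ (free states of the GI theory are precisely the diagonal ones) and a GI operation $\Lambda$ on $\mathcal{L}((\mathbb{C}^{d})^{\otimes n})$ such that $\|\Lambda(\sigma^{\otimes nR}\otimes\tau_{J})-\rho^{\otimes n}\|\le\varepsilon$. By Theorem~\ref{thm:Schur} one may write $\Lambda[X]=A\odot X$ with $A$ PSD and $A_{ii}=1$, and the standard Cauchy--Schwarz bound for PSD matrices gives $|A_{ij}|\le\sqrt{A_{ii}A_{jj}}=1$. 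Hence, entry-wise, $|\Lambda[X]_{ij}|\le|X_{ij}|$.

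Since $\tau_{J}$ is diagonal, every matrix entry of $\sigma^{\otimes nR}\otimes\tau_{J}$ whose last $n(1-R)$ row and column multi-indices disagree is zero, and by the entry-wise bound the same holds for $\Lambda(\sigma^{\otimes nR}\otimes\tau_{J})$. Equivalently, this output is fixed by the dephasing $\Delta_{J}$ acting on the last $n(1-R)$ subsystems. Trace-norm contractivity of the CPTP map $\Delta_{J}$ applied to the approximation, combined with the triangle inequality, then yields
\[
\bigl\|\rho^{\otimes n}-\Delta_{J}(\rho^{\otimes n})\bigr\|\le 2\varepsilon.
\]
Because $\Delta_{J}(\rho^{\otimes n})=\rho^{\otimes nR}\otimes\Delta(\rho)^{\otimes n(1-R)}$ and the trace norm is multiplicative on tensor products, the left-hand side equals $\|\rho^{\otimes n(1-R)}-\Delta(\rho)^{\otimes n(1-R)}\|$. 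Coherence of $\rho$ gives $\rho\neq\Delta(\rho)$, and by the asymptotic distinguishability of i.i.d.\ copies of distinct states (e.g., via multiplicativity of the fidelity or the quantum Chernoff bound) this trace-norm distance tends to $2$ as $n(1-R)\to\infty$. Letting $\varepsilon\to0$ then forces $2\le0$, the desired contradiction.

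The conceptual heart of the proof is the ``dephase-the-junk'' step: the Schur-product structure of GI operations forces the output to inherit any block-diagonal pattern present in the input, and the incoherent ancilla supplies exactly the pattern that destroys all off-diagonal coherence between the last $n(1-R)$ copies on the output side. A minor technicality is that $nR$ need not be integer, which is handled in the usual way by passing to a subsequence of $n$ or using $\lfloor nR\rfloor$.
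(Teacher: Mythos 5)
Your proof is correct, and it reaches the paper's conclusion by a slightly different technical route. The common core is the observation that the junk registers can never acquire coherence under a GI map, so the output cannot approximate $\rho^{\otimes n}$, which is coherent on those registers. The paper implements this by tracing out the system registers and invoking Lemma~\ref{noactivation} to write the junk marginal as $\tilde{\Lambda}(junk)$ for a GI map $\tilde{\Lambda}$, hence as an incoherent state, and then appeals to the fact that a coherent state is bounded away from the incoherent set. You instead work with the full output state: the Schur representation $\Lambda[X]=A\odot X$ kills every entry where the input already vanishes (your Cauchy--Schwarz bound $|A_{ij}|\le 1$ is true but not even needed here, since $(A\odot X)_{ij}=A_{ij}X_{ij}=0$ whenever $X_{ij}=0$), so the output is $\Delta_J$-invariant, and contractivity of $\Delta_J$ plus multiplicativity of the trace norm reduces everything to lower-bounding $\|\rho^{\otimes k}-\Delta(\rho)^{\otimes k}\|$. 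This buys you independence from Lemma~\ref{noactivation} and a cleaner quantitative statement; on the other hand, your final appeal to asymptotic distinguishability (the distance tending to $2$) is stronger than necessary -- monotonicity under partial trace already gives $\|\rho^{\otimes k}-\Delta(\rho)^{\otimes k}\|\ge\|\rho-\Delta(\rho)\|>0$, which is exactly the "bounded away from the incoherent set" step the paper uses. All auxiliary facts you use (the Schur representation of joint GI maps in the product incoherent basis, the diagonality of the junk, the non-integrality of $nR$) are handled correctly.
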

\begin{proof}
If dilution at rate $R<1$ (i.\ e.\ leaving aside one-copy deterministic
transformations when possible) was possible, this would require that
$\forall\varepsilon>0$ there existed integers $m<n$ such that
\begin{equation}
||\Lambda(\sigma^{\otimes m}\otimes junk)-\rho^{\otimes n}||=||\tau-\rho^{\otimes n}||\leq\varepsilon.
\end{equation}
Notice that the presence of some junk incoherent part is indispensable
as GI operations cannot increase the dimensionality. Since the trace
distance cannot increase by quantum operations, by tracing out the
$m$ particles in the system the above equation requires in particular
that
\begin{equation}
||\tr_{system}\tau-\rho^{\otimes(n-m)}||\leq\varepsilon.
\end{equation}
Now, Lemma \ref{noactivation} implies that $\tr_{system}\tau=\tilde{\Lambda}(junk)$
for some GI map $\tilde{\Lambda}$. Hence, it must hold that
\begin{equation}
||\tilde{\Lambda}(junk)-\rho^{\otimes(n-m)}||\leq\varepsilon.
\end{equation}
However, the junk part is incoherent and, therefore, so must be $\tilde{\Lambda}(junk)$.
Any coherent state is bounded away from the set of incoherent states
and, thus, the above inequality cannot hold $\forall\varepsilon>0$
for any coherent state $\rho$.
\end{proof}
Thus, dilution with rate $R<1$ from a more useful qudit state into
a less useful qudit state is impossible by GI operations independently
of the measure of coherence used.

\section{\label{SecFI}Fully incoherent operations}

\subsection{General concept}

As discussed in the introduction, one of the main reasons to consider
GI operations was to rule out any form of hidden coherence in the
free operations of a resource theory of coherence. However, we have
seen in the previous section that state manipulation under GI operations
might be too limited. This leads to think whether one could consider
a larger set of allowed operations still having the property that
every Kraus representation is incoherent but that could allow for
a richer structure for state manipulation.

We recall that the incoherent operations considered in the resource
theory of coherence introduced by Baumgratz \emph{et al.} in \cite{Baumgratz2014}
are given by Kraus operators $\{K_{i}\}$ such that for every incoherent
state $\rho$, $K_{i}\rho K_{i}^{\dag}$ is (up to normalization)
an incoherent state as well $\forall i$. It will be relevant in the
following to notice that such $\{K_{i}\}$ are characterized by having
at most one non-zero entry in every column \cite{Du2015b}. The fact
that the Kraus operators of a different Kraus representation of some
incoherent operation might not be incoherent can easily be seen \cite{Streltsov2015d}
by using the following well-known theorem (see e.\ g.\ \cite{Watrous2011,Holevo2012}).
\begin{thm}
\label{kraus} Two sets of Kraus operators
$\{K_{j}\}$ and $\{L_{i}\}$ correspond to Kraus representations
of the same map if and only if there exists a partial isometry matrix $V$
such that
\begin{equation}
L_{i}=\sum_{j}V_{ij}K_{j}.\label{krauseq}
\end{equation}

\end{thm}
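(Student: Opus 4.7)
The plan is to reduce Theorem \ref{kraus} to the standard fact that two decompositions of a positive semidefinite operator into unnormalized rank-one projectors are related by a partial isometry, bridging the two statements through the Choi--Jamio{\l}kowski correspondence.

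I would first dispose of the easy direction. Assume $L_i=\sum_j V_{ij}K_j$ with $V$ a partial isometry satisfying $\sum_i V_{ij}^{*}V_{ik}=\delta_{jk}$ (zero-padding the smaller Kraus family with zero operators if necessary so that the two index ranges match). A direct computation gives
\[
\sum_i L_i\rho L_i^{\dagger}=\sum_{j,k}\Bigl(\sum_i V_{ij}V_{ik}^{*}\Bigr) K_j \rho K_k^{\dagger}=\sum_j K_j \rho K_j^{\dagger},
\]
so both families induce the same channel on every input $\rho$.

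For the nontrivial direction, define the vectorization $v(K):=(K\otimes\one)\sum_k\ket{k}\ket{k}\in H\otimes H$, which is a linear bijection between $\mathcal{L}(H)$ and $H\otimes H$. A short computation shows that the Choi matrix $J(\Lambda):=(\Lambda\otimes\mathrm{id})\bigl(\sum_{k,l}\kb{k}{l}\otimes\kb{k}{l}\bigr)$ satisfies $J(\Lambda)=\sum_j v(K_j)\,v(K_j)^{\dagger}$. Hence two Kraus families $\{K_j\}$ and $\{L_i\}$ represent the same channel if and only if $\sum_j v(K_j)v(K_j)^{\dagger}=\sum_i v(L_i)v(L_i)^{\dagger}$, i.e., they give two decompositions of one and the same PSD operator $J(\Lambda)$ into unnormalized pure-state terms. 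The standard uniqueness-up-to-a-partial-isometry result for such decompositions (most easily seen by diagonalizing $J(\Lambda)$ and comparing each family with the eigenbasis decomposition) then yields $v(L_i)=\sum_j V_{ij}v(K_j)$ for some partial isometry $V$. Because vectorization is linear and injective, this equality transfers immediately to the operator level: $L_i=\sum_j V_{ij}K_j$.

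The only real obstacle is bookkeeping: one must zero-pad the smaller Kraus family so that $V$ can be taken as a square partial isometry (or, equivalently, phrase the result in terms of a rectangular isometry from the smaller indexing set into the larger), and one must invoke the general statement that two unnormalized pure-state decompositions of a PSD matrix are related by a partial isometry. Both steps are entirely standard, and the theorem follows by stringing them together.
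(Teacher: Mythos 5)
The paper does not actually prove Theorem~\ref{kraus}: it is quoted as a well-known result with a pointer to \cite{Watrous2011,Holevo2012}, so there is no in-paper argument to compare against. Your proof is correct and is essentially the standard one found in those references: the forward direction by direct computation, and the converse by identifying $\sum_j K_j\cdot K_j^{\dagger}=\sum_i L_i\cdot L_i^{\dagger}$ with two unnormalized rank-one decompositions of the same Choi matrix and invoking the Hughston--Jozsa--Wootters-type uniqueness of such decompositions, then pulling the relation back through the (injective, linear) vectorization. One small point worth making explicit: as literally stated, the ``if'' direction fails for a general partial isometry (take $V=0$, which is vacuously a partial isometry and yields $L_i=0$); your computation correctly uses the stronger hypothesis $\sum_i V_{ij}^{*}V_{ik}=\delta_{jk}$, i.e.\ that $V$ is an isometry on the index space of the $K_j$ (equivalently a unitary after zero-padding), which is the condition the theorem actually needs and the one delivered by the converse direction. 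This is a looseness in the statement rather than a gap in your argument.
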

On the contrary, using Theorem \ref{kraus} it is straightforward
to check that GI maps are incoherent (in fact, even diagonal) in every
Kraus representation. As discussed above, it comes as a natural question
whether GI maps constitute the most general class of operations having
this property. Interestingly, as we will show in the following, the
answer is no: there exist operations which are not GI but still incoherent
in every Kraus representation. A simple example for such an operation
is the erasing map, which puts every input state onto the state $|0\rangle\langle0|$.
This operation is incoherent in every Kraus representation because
it must hold that $K_{i}\rho K_{i}^{\dag}\propto|0\rangle\langle0|$
for every Kraus operator. On the other hand, the operation is clearly
not GI because any incoherent state that is not $|0\rangle\langle0|$
does not remain invariant under this map. We will call this class
of operations which are incoherent in every Kraus representation \textbf{\emph{f}}\emph{ully
}\textbf{\emph{i}}\emph{ncoherent (FI).}

One might wonder then what are the properties of the class of FI maps
and which differences it has with the class of GI maps. In particular,
we want to compare these sets of operations in the task of state transformation.
For this, we will first provide a full characterization of FI operations
in the following theorem.
\begin{thm}
\label{thm:FI} A quantum operation is FI if and only if all Kraus
operators are incoherent and have the same form.
\end{thm}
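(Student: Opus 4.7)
The plan is to prove both implications using Theorem~\ref{kraus} together with the characterization of incoherent Kraus operators as those with at most one non-zero entry per column. I interpret ``same form'' as the existence of a single function $f:\{1,\ldots,d\}\to\{1,\ldots,d\}$ such that $K_i|j\rangle\in\mathrm{span}\{|f(j)\rangle\}$ for every Kraus operator $K_i$ and every basis index $j$ (the zero vector being allowed). This matches the motivating erasing-map example, where every Kraus operator obeys $f(j)=0$, and it should be stated explicitly at the start of the proof.

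For the easier direction, suppose $\{K_j\}$ are incoherent and share a common $f$. Writing $K_j|k\rangle=c_k^{(j)}|f(k)\rangle$ and applying Theorem~\ref{kraus}, any alternative Kraus representation $L_i=\sum_j V_{ij}K_j$ satisfies $L_i|k\rangle=\bigl(\sum_j V_{ij}c_k^{(j)}\bigr)|f(k)\rangle$. Hence each $L_i$ also has at most one non-zero entry per column and is incoherent, so the map is FI.

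For the other direction, assume the map $\Lambda$ is FI. Then some Kraus representation $\{K_i\}$ exists with all $K_i$ incoherent, and I show these must share a common $f$ by contradiction. Suppose there are indices $a\neq b$ and $j$ with $K_a|j\rangle=\alpha|k\rangle$ and $K_b|j\rangle=\beta|l\rangle$, where $\alpha,\beta\neq 0$ and $k\neq l$. Define $L_a=(K_a+K_b)/\sqrt{2}$, $L_b=(K_a-K_b)/\sqrt{2}$, and $L_i=K_i$ for $i\notin\{a,b\}$. The matrix $V$ implementing this mixing is the direct sum of a $2\times 2$ Hadamard rotation on the $\{a,b\}$-block with the identity elsewhere, hence unitary (and in particular a partial isometry), so Theorem~\ref{kraus} certifies that $\{L_i\}$ is another valid Kraus representation of $\Lambda$. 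But $L_a|j\rangle=(\alpha|k\rangle+\beta|l\rangle)/\sqrt{2}$ is a coherent superposition of two distinct incoherent basis states, so $L_a$ has two non-zero entries in its $j$-th column and is not incoherent, contradicting FI.

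The main obstacle is pinning down ``same form'' so that the argument is airtight: the condition must tolerate individual Kraus operators vanishing on some columns while constraining their non-zero images to agree with a common $f$. Once this is fixed, the only remaining subtlety is verifying that mixing two Kraus operators by a $2\times 2$ unitary yields a bona fide Kraus representation, which is immediate since such a $V$ is unitary and therefore a partial isometry of the type required by Theorem~\ref{kraus}.
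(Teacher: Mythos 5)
Your proof is correct and follows essentially the same route as the paper: linear combinations via Theorem~\ref{kraus} preserve the common column structure for the easy direction, and a $2\times2$ Hadamard block mixing two Kraus operators with mismatched columns produces a non-incoherent representative for the converse. Your explicit formalization of ``same form'' via the function $f$ is a welcome clarification of what the paper states only verbally, but the substance of the argument is identical.
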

\noindent Before we prove the theorem some remarks are in place. The
requirement that all Kraus operators have the same form means that
their nonzero entries are all at the same position (i.\ e.\ whenever
there is a non-zero entry in a given column, it must occur at the
same row for every Kraus operator). As an example, according to the
theorem any single-qubit quantum operation defined by the Kraus operators
\begin{equation}
K_{1}=\left(\begin{array}{cc}
a & b\\
0 & 0
\end{array}\right),\,\,\,K_{2}=\left(\begin{array}{cc}
c & d\\
0 & 0
\end{array}\right)\label{eq:FIexample}
\end{equation}
is fully incoherent, since both Kraus operators are incoherent and
have the same form. The completeness condition $\sum_{i}K_{i}^{\dagger}K_{i}=\openone$
puts constraints on the complex parameters: $|a|^{2}+|c|^{2}=|b|^{2}+|d|^{2}=1$
and $a^{*}b+c^{*}d=0$. Note that -- according to the theorem -- this
map is FI, but it is not GI since the Kraus operators are not diagonal.
Indeed, this is exactly the erasing map which maps every state onto
$\ket{0}\bra{0}$. We will now provide the proof of Theorem \ref{thm:FI}.
\begin{proof}
That maps with this property are FI is immediate. If the Kraus operators
in one representation are all incoherent and have a particular given
form, then, by Theorem \ref{kraus}, so does every Kraus representation
since Eq.\ (\ref{krauseq}) preserves this structure. Hence, every
Kraus representation is incoherent.

Thus, to complete the proof we only need to see that any map which
has one (incoherent) Kraus representation in which not all operators
are of the same form cannot be FI, i.\ e.\ it must then admit another
Kraus representation which is not incoherent. For such, we can take
without loss of generality that
\begin{equation}
col_{1}(K_{1})=\left(\begin{array}{c}
\ast\\
0\\
0\\
\vdots\\
0
\end{array}\right),\quad col_{1}(K_{2})=\left(\begin{array}{c}
0\\
\ast\\
0\\
\vdots\\
0
\end{array}\right),
\end{equation}
where $col_{i}(A)$ denotes the $i$-th column of the matrix $A$
and $\ast$ an arbitrary non-zero number. Moreover, we define two
unitary matrices $V$ and $U$:
\begin{equation}
V=\left(\begin{array}{cc}
U & 0\\
0 & \openone
\end{array}\right),\quad U=\frac{1}{\sqrt{2}}\left(\begin{array}{rr}
1 & 1\\
1 & -1
\end{array}\right).
\end{equation}
Since $V$ is unitary the set of Kraus operators $\{L_{i}\}$ constructed
using Eq.\ (\ref{krauseq}) is another Kraus representation of the
map given by $\{K_{i}\}$. However, we find that
\begin{equation}
col_{1}(L_{1})=\left(\begin{array}{c}
\ast\\
\ast\\
0\\
\vdots\\
0
\end{array}\right),
\end{equation}
and the representation given by $\{L_{i}\}$ is therefore not incoherent.
\end{proof}
From Theorem \ref{thm:FI} it is clear that GI maps are contained
in the class of FI maps since every Kraus operator in every Kraus
representation is diagonal, hence fulfilling the condition of Theorem
\ref{thm:FI}. As also noted above Theorem~\ref{thm:FI}, there exist
operations which are FI but not GI, and the erasing map $\Lambda[\rho]=\proj{0}$
is one example. Another instance of FI maps, taking for example maps
on $\mathcal{L}(\mathbb{C}^{3})$, are those for which the Kraus operators
are given by
\begin{equation}
K_{i}=\left(\begin{array}{ccc}
a_{i} & 0 & 0\\
0 & 0 & 0\\
0 & b_{i} & c_{i}
\end{array}\right).
\end{equation}

A property fulfilled by FI maps is that these operations allow to
prepare any pure incoherent state from an arbitrary input state. For
that one would use the corresponding erasure map with the property
$\Lambda[\rho]=\proj{i}$ for every state $\rho$. It can be easily
seen that any such map is always FI for any choice of incoherent state
$\proj{i}$. This appears to be a reasonable property from the point
of view of a resource theory since it is usually considered that free
states can be prepared with the free operations. Notice, however,
that this is not possible with GI operations as all incoherent states
are fixed points of these maps (e.\ g.\ starting from the initial
state $\ket{0}$, GI operations do not allow to create any other incoherent
state $\ket{i}$). Nevertheless, as discussed in the introduction,
the physical setting might impose this constraint. For instance, this
property can be interpreted as energy-preservation: if the states
$\ket{i}$ are the eigenstates of some nondegenerate Hamiltonian,
GI operations are incoherent operations which additionally preserve
the energy of the system.

Another feature of FI maps is that any incoherent unitary transformation
can be implemented. Thus, elements of the incoherent basis can be
permuted and the coherence set is no longer meaningful. Hence, the
coherence rank $r$ takes the relevant role instead, similarly to
state manipulation under incoherent operations.

On the other hand, a striking feature of FI maps is that they constitute
a non-convex set. More precisely, given two FI operations $\Lambda_{1}$
and $\Lambda_{2}$, their convex combination
\begin{equation}
\Lambda[\rho]=p\Lambda_{1}[\rho]+(1-p)\Lambda_{2}[\rho]\label{eq:nonconvex}
\end{equation}
is not always fully incoherent. This can be demonstrated with the
following single-qubit operations:
\begin{align}
\Lambda_{1}[\rho] & =\sigma_{x}\rho\sigma_{x},\\
\Lambda_{2}[\rho] & =\sigma_{z}\rho\sigma_{z}
\end{align}
with the Pauli matrices
\begin{equation}
\sigma_{x}=\left(\begin{array}{cc}
0 & 1\\
1 & 0
\end{array}\right),\,\,\,\sigma_{z}=\left(\begin{array}{cc}
1 & 0\\
0 & -1
\end{array}\right).
\end{equation}
While both operations $\Lambda_{1}$ and $\Lambda_{2}$ are fully
incoherent, their convex combination $\Lambda$ in Eq.~(\ref{eq:nonconvex})
is not fully incoherent for $0<p<1$. This can be seen by noting that
the Kraus operators of $\Lambda$ are given by $K_{1}=\sqrt{p}\sigma_{x}$
and $K_{2}=\sqrt{1-p}\sigma_{z}$. Since these Kraus operators do
not have the same form for $0<p<1$, by Theorem \ref{thm:FI} the
operation cannot be fully incoherent.

This non-convexity is, thus, a consequence of the fact that FI maps
are characterized by a property of the set of implemented Kraus operators
and not of each individual operator as it is the case for incoherent
or GI operations. In practice, this means that if one considers state
manipulation under FI maps, it turns out that two particular operations
might be allowed but not to implement each of them with a given probability.
In particular, this implies that, although pure incoherent states
can be prepared with the free operations as pointed out above, this
result does not need to extend to mixed incoherent states despite
the fact that they constitute free states as well. This is because
it is not allowed to mix different FI fixed-output maps. Actually,
it can be checked that it is not the case that every state can be
transformed to any mixed incoherent state by some FI map. The interested
reader is referred to Appendix \ref{sec:appendixB} for a particular
example. This construction relies on an observation on the structure
of FI maps used in Theorem \ref{thm:FIconv} below.

\subsection{\label{sub:Permutations}Permutations as basic FI operations}

As mentioned above, any genuinely incoherent operation is also fully
incoherent. Here, we will consider another important class of FI operations,
which we call \emph{permutations}. A permutation $P_{ij}$ with $i\neq j$
is a unitary which interchanges the states $\ket{i}$ and $\ket{j}$,
and preserves all states $\ket{k}$ for $k\neq i,j$:
\begin{align}
 & P_{ij}\ket{i}=\ket{j},\\
 & P_{ij}\ket{j}=\ket{i},\\
 & P_{ij}\ket{k}=\ket{k}\,\,\mathrm{for\,all}\,\,k\neq i,j.
\end{align}
The above definition involves the permutation of only two states $\ket{i}$
and $\ket{j}$. In the following, we will also consider more general
permutations with more than two elements. We will denote an arbitrary
general permutation by $P$. Any such permutation can be decomposed
as a product of permutations of only two states. Notice that any such
$P$ corresponds to an FI unitary transformation.

\subsection{State manipulation under FI operations}

\subsubsection{Pure state deterministic transformations}

In this section we study the potential for deterministic state manipulation
if the allowed set of operations is given by FI maps. Interestingly,
we will see in the following that -- contrary to the case of GI maps
-- transformations among pure states are possible. However, these
are rather limited as shown in the following theorem.
\begin{thm}
\label{thm:FIconv} A deterministic FI transformation from $|\psi\rangle$
to $|\phi\rangle$ is possible only if $r(\phi)\leq r(\psi)$. Moreover,
for $r(\phi)=r(\psi)$ a transformation is possible if and only if
$\ket{\phi}=U\ket{\psi}$ with a fully incoherent unitary $U$.\end{thm}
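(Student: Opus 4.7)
The plan is to exploit the characterization of FI maps in Theorem~\ref{thm:FI}: every Kraus operator of an FI operation has the common form $K_a=\sum_{i\in S}c_{i,a}\ket{f(i)}\bra{i}$, where the function $f$ and the column support $S\subseteq\{1,\dots,d\}$ are shared across all $a$, while only the scalar coefficients $c_{i,a}$ vary. Since $\Lambda[\rho_\psi]=\rho_\phi$ is pure, each Kraus operator must satisfy $K_a\ket{\psi}=\alpha_a\ket{\phi}$ for some $\alpha_a\in\mathbb{C}$, with $\sum_a|\alpha_a|^2=1$.

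For the necessary condition $r(\phi)\leq r(\psi)$, I would expand
\begin{equation}
K_a\ket{\psi}=\sum_{j}\left(\sum_{\substack{i\in\mathrm{supp}(\psi)\cap S\\ f(i)=j}}c_{i,a}\psi_i\right)\ket{j},
\end{equation}
whose coherence rank is at most $|f(\mathrm{supp}(\psi)\cap S)|\leq|\mathrm{supp}(\psi)|=r(\psi)$. Picking any $a$ with $\alpha_a\neq 0$ then gives the bound, consistent with the general fact that incoherent Kraus operators cannot raise the coherence rank of a pure state.

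For the second claim, the ``if'' direction is immediate since any FI unitary is itself an FI operation. For ``only if'', the equality $r(\phi)=r(\psi)$ saturates the above chain of inequalities, which simultaneously forces $\mathrm{supp}(\psi)\subseteq S$ (otherwise components of $\ket{\psi}$ would be annihilated) and $f$ injective on $\mathrm{supp}(\psi)$ (otherwise two components would be routed to the same basis element, strictly reducing the image count). Hence $f$ restricts to a bijection $\mathrm{supp}(\psi)\to\mathrm{supp}(\phi)$, and matching coefficients yields $c_{i,a}\psi_i=\alpha_a\phi_{f(i)}$ for every $i\in\mathrm{supp}(\psi)$. Trace preservation $\sum_a K_a^\dagger K_a=\openone$ gives $\sum_a|c_{i,a}|^2=1$ on the diagonal, which combined with $\sum_a|\alpha_a|^2=1$ produces $|\phi_{f(i)}|=|\psi_i|$. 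Using $r(\psi)=r(\phi)$, I would then extend $f$ to a bijection $\tilde f$ of $\{1,\dots,d\}$ by pairing the complements arbitrarily, and define $U=\sum_i e^{i\theta_i}\ket{\tilde f(i)}\bra{i}$ with phases $\theta_i$ chosen so that $e^{i\theta_i}\psi_i=\phi_{f(i)}$ on $\mathrm{supp}(\psi)$. This $U$ is a permutation composed with a diagonal unitary, hence an FI unitary, and $U\ket{\psi}=\ket{\phi}$ by construction. The main hurdle is extracting both structural constraints on $f$ from the rank equality; the phase matching via trace preservation and the extension of $f$ to a full permutation are then routine.
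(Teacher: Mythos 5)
Your proof is correct, and it rests on the same foundation as the paper's: Theorem \ref{thm:FI} gives the common monomial structure $K_a=\sum_{i\in S}c_{i,a}\ket{f(i)}\bra{i}$, and the rank-equality case forces $f$ to act as a bijection on the relevant support. Where you diverge is in the second half: the paper observes that in the equal-rank case the map factors as $\Lambda[\rho]=P\Lambda_{\mathrm{gi}}[\rho]P^{\dagger}$ for a permutation $P$ and a GI operation $\Lambda_{\mathrm{gi}}$, and then delegates to Theorem \ref{noconv} (itself proved via the Schur characterization) to conclude that $\Lambda_{\mathrm{gi}}$ must act unitarily on $\rho_{\psi}$. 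You instead re-derive the needed fact from scratch: matching coefficients in $K_a\ket{\psi}=\alpha_a\ket{\phi}$ and summing the diagonal of the completeness relation gives $|\phi_{f(i)}|=|\psi_i|$ directly, after which the FI unitary is built by hand as a permutation times a diagonal phase matrix. Your route is more self-contained (no appeal to the GI machinery) and is arguably more careful on one point: when $r(\psi)<d$ the Kraus operators need not be full monomial matrices on the whole space (the paper asserts one nonzero element in \emph{each} row and column, which is only guaranteed on the columns indexed by $R(\psi)$), and your argument, which only requires injectivity of $f$ on $\mathrm{supp}(\psi)$ followed by an arbitrary extension to a full permutation, sidesteps this. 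The price is a somewhat longer chain of elementary bookkeeping where the paper gets the conclusion in one line from Theorem \ref{noconv}.
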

\begin{proof}
The first part of the theorem is true due to the more general result
stating that the coherence rank cannot increase under incoherent operations
as we already mentioned in Sec.\ \ref{coherenceranksec}. Since FI
operations are a subclass of general incoherent operations, the coherence
rank also cannot increase under FI operations.

We will now prove the second part of the theorem, assuming that $\ket{\psi}$
and $\ket{\phi}$ are two states with the same coherence rank, and
that
\begin{equation}
\rho_{\phi}=\Lambda[\rho_{\psi}]
\end{equation}
with some FI operation $\Lambda$. Since both states have the same
coherence rank, the FI operation $\Lambda$ must contain at least
one Kraus operator $K$ which has one nonzero element in each row
and column. Moreover, all remaining Kraus operators must have the
same shape as $K$, i.e., their nonzero elements must be at the same
positions as the nonzero elements of $K$. It is now crucial to note
that any such map must be a composition of a GI operation $\Lambda_{\mathrm{gi}}$
and a permutation $P$:
\begin{equation}
\Lambda[\rho]=P\Lambda_{\mathrm{gi}}[\rho]P^{\dagger}.
\end{equation}
Now, since we assume that $\Lambda[\rho_{\psi}]$ is pure, it must
be that $\Lambda_{\mathrm{gi}}[\rho_{\psi}]$ is also pure, since
the permutation $P$ is a unitary operation. By Theorem \ref{noconv},
the fact that $\Lambda_{\mathrm{gi}}[\rho_{\psi}]$ is pure implies
that $\Lambda_{\mathrm{gi}}[\rho_{\psi}]=U_{\mathrm{gi}}\rho_{\psi}U_{\mathrm{gi}}^{\dagger}$
with some diagonal unitary $U_{\mathrm{gi}}$. Combining these results
we arrive at the following expression:
\begin{equation}
\rho_{\phi}=\Lambda[\rho_{\psi}]=P\Lambda_{\mathrm{gi}}[\rho_{\psi}]P^{\dagger}=PU_{\mathrm{gi}}\rho_{\psi}U_{\mathrm{gi}}^{\dagger}P^{\dagger}.
\end{equation}
The proof of the theorem is complete by noting that $PU_{\mathrm{gi}}$
is a fully incoherent unitary.
\end{proof}
If we apply now the above theorem to study single-qubit FI operations
from an arbitrary single-qubit state $|\psi\rangle$, we see that
the only possible output for transformations to a pure state is either
an incoherent state or any state which is equivalent to $|\psi\rangle$
under FI unitaries. This shows that FI operations are strictly less
powerful than general incoherent operations in their ability to convert
pure quantum states into each other. In particular, general incoherent
operations can convert the state $\ket{+_{2}}$ into any other single-qubit
state \cite{Baumgratz2014}. This is not possible via FI operations,
as follows from the above discussion by taking $\ket{\psi}=\ket{+_{2}}$.

Having seen that FI operations are less powerful than general incoherent
operations, we will now show that FI operations are still more powerful
than GI operations. For this, we can use a fixed-output map to see
that by FI operations it is possible to transform the state $\ket{+_{2}}$
into the state $\ket{0}$. By Theorem \ref{noconv} this process is
not possible with GI operations. However, this is a transformation
to a non-resource state. One might wonder whether FI operations allow
for nontrivial transformations to a pure coherent state. In the following,
we provide such an example. Consider a FI map with Kraus operators
given by
\begin{equation}
K_{1}=\left(\begin{array}{ccc}
a_{1} & 0 & c_{1}\\
0 & b_{1} & 0\\
0 & 0 & 0
\end{array}\right),\quad K_{2}=\left(\begin{array}{ccc}
a_{2} & 0 & c_{2}\\
0 & b_{2} & 0\\
0 & 0 & 0
\end{array}\right).
\end{equation}
The normalization condition $\sum_{i}K_{i}^{\dagger}K_{i}=\openone$
imposes that
\begin{align}
a_{1}c_{1}^{*}+a_{2}c_{2}^{*} & =0,\nonumber \\
|x_{1}|^{2}+|x_{2}|^{2} & =1,\quad x=a,b,c.\label{eq:normalization}
\end{align}
In order for the (normalized) state
\begin{equation}
|\psi\rangle=\left(\begin{array}{c}
\psi_{1}\\
\psi_{2}\\
\psi_{3}
\end{array}\right)\in\mathbb{R}^{3}
\end{equation}
to be convertible with this map into another pure state, the only
requirement is that $K_{1}|\psi\rangle=kK_{2}|\psi\rangle$ for some
constant $k\in\mathbb{C}$. As can be verified by inspection, this
condition is equivalent to
\begin{equation}
\psi_{3}=\frac{a_{2}b_{1}-a_{1}b_{2}}{b_{2}c_{1}-b_{1}c_{2}}\psi_{1}.\label{eq:condition}
\end{equation}

Now, a proper choice of the free parameters can be made so that the
conditions (\ref{eq:normalization}) and (\ref{eq:condition}) are
met. For instance, we can take $b_{1}=b_{2}=1/\sqrt{2}$, $a_{1}=c_{2}=\sqrt{3}/2$
and $a_{2}=-c_{1}=1/2$, which leads to
\begin{equation}
\psi_{3}=\frac{(\sqrt{3}-1)^{2}}{2}\psi_{1}.\label{eq:condition-2}
\end{equation}
With a real choice of $\psi_{1}$ small enough the above equation
can be fulfilled with a properly normalized state. Indeed, if we further
choose $\psi_{1}=1/2$, condition (\ref{eq:condition-2}) implies
that $\psi_{3}=1-\sqrt{3}/2$. The remaining parameter $\psi_{2}$
is restricted by the normalization of the state $\ket{\psi}$, and
can be chosen as $\psi_{2}=(1-\psi_{1}^{2}-\psi_{3}^{2})^{1/2}=(\sqrt{3}-1)^{1/2}$.
The pure final state $\Lambda[\rho_{\psi}]=\rho_{\phi}$ is then given
by
\begin{equation}
\ket{\phi}=\frac{\sqrt{6}-\sqrt{2}}{2}\ket{0}+(\sqrt{3}-1)^{1/2}\ket{1}.
\end{equation}
This example shows that FI operations allow for nontrivial transformations
between pure states which are not possible with genuinely incoherent
operations. Still, as we will see in the next theorem with the particular
example of the state $|+_{3}\rangle$, it seems that FI transformations
among pure coherent states are nevertheless very constrained.
\begin{thm}
\label{thm:QutritConvertibility}Via deterministic FI operations,
the state $\ket{+_{3}}$ can be transformed into one of the following
pure states:
\begin{equation}
\Lambda[\ket{+_{3}}\bra{+_{3}}]=\begin{cases}
\proj{0}\\
\proj{\psi}\\
\proj{+_{3}}
\end{cases}\label{eq:Qutrits}
\end{equation}
with $\ket{\psi}=\sqrt{2/3}|0\rangle+\sqrt{1/3}|1\rangle$. Moreover,
up to FI unitaries, this is the full set of pure states which can
be obtained from $\ket{+_{3}}$ via deterministic FI operations. \end{thm}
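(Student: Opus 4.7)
The plan is to verify that the three listed states are achievable and then show, using Theorem \ref{thm:FIconv} together with a case analysis on the coherence rank of the output, that no other targets are possible up to FI unitaries.

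Achievability is straightforward. The identity realises $|+_3\rangle\mapsto|+_3\rangle$, and the fixed-output (erasing) map $\Lambda[\rho]=\proj{0}$, which is FI, realises $|+_3\rangle\mapsto|0\rangle$. For the nontrivial target $|\psi\rangle=\sqrt{2/3}\,|0\rangle+\sqrt{1/3}\,|1\rangle$, I would exhibit an explicit FI map with two Kraus operators of common shape sending columns $0,1$ into row $0$ and column $2$ into row $1$, for instance
\begin{equation}
K_1=\begin{pmatrix}1 & 0 & 0\\ 0 & 0 & 1/\sqrt{2}\\ 0 & 0 & 0\end{pmatrix},\quad
K_2=\begin{pmatrix}0 & 1 & 0\\ 0 & 0 & 1/\sqrt{2}\\ 0 & 0 & 0\end{pmatrix}.
\end{equation}
Completeness $K_1^\dagger K_1+K_2^\dagger K_2=\openone$ is easily verified, and both Kraus operators send $|+_3\rangle$ to a vector proportional to $|\psi\rangle$, so the output is $\rho_\psi$ deterministically.

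For the converse, let $\Lambda[\rho_{+_3}]=\rho_\phi$ with $\Lambda$ an FI map. By Theorem \ref{thm:FIconv} the coherence rank cannot increase, so $r(\phi)\in\{1,2,3\}$. If $r(\phi)=3$, the second part of that theorem gives $|\phi\rangle=U|+_3\rangle$ with $U$ an FI unitary, so $|\phi\rangle$ reduces to $|+_3\rangle$ up to FI unitaries. If $r(\phi)=1$, then $|\phi\rangle=|i\rangle$ and every basis element is a permutation of $|0\rangle$, so all rank-one targets reduce to $|0\rangle$ up to FI unitaries.

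The main case is $r(\phi)=2$, and this is where the real content of the theorem sits. Up to post-composing with a row permutation and relabelling columns (absorbed by the permutation invariance of $|+_3\rangle$) -- both FI unitaries -- I may assume $|\phi\rangle=\alpha|0\rangle+\beta|1\rangle$ with $\alpha,\beta>0$ and that the common shape of the Kraus operators has all nonzero entries confined to rows $0$ and $1$. Completeness forces every column to carry at least one nonzero entry, so the three columns partition into a set $S_0$ routed to row $0$ and a set $S_1$ routed to row $1$; purity of the output on both $|0\rangle$ and $|1\rangle$ makes both $S_j$ nonempty, leaving $(|S_0|,|S_1|)\in\{(2,1),(1,2)\}$. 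Taking for instance $S_0=\{0,1\}$, $S_1=\{2\}$, denote the entries by $a_i,b_i,c_i$, and let $r=\alpha/\beta$. Purity of the output requires $a_i+b_i=r\,c_i$ for every $i$, so
\begin{equation}
\sum_i|a_i+b_i|^2=r^2\sum_i|c_i|^2=r^2.
\end{equation}
Completeness gives $\sum_i|a_i|^2=\sum_i|b_i|^2=1$ and $\sum_i a_i^*b_i=0$, whence $\sum_i|a_i+b_i|^2=2$. Therefore $r=\sqrt{2}$ and $|\phi\rangle=|\psi\rangle$; the symmetric case $(|S_0|,|S_1|)=(1,2)$ differs by a $0\leftrightarrow 1$ permutation (itself FI) and reduces to $|\psi\rangle$ as well. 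The main obstacle is thus the rank-two case, and its clean resolution comes from combining the trace-preservation constraint (the off-diagonal condition $\sum_i a_i^*b_i=0$) with the rigid proportionality forced by purity of the output, which together pin the admissible ratio to a single value.
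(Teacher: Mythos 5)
Your proposal is correct and follows essentially the same route as the paper: achievability via explicit FI maps, and a converse by case analysis on the coherence rank of the target, using Theorem \ref{thm:FIconv} for ranks $1$ and $3$ and, for rank $2$, a permutation-reduced normal form for the common Kraus shape combined with the completeness relations ($\sum_i|a_i|^2=\sum_i|b_i|^2=1$, $\sum_i a_i^*b_i=0$) to pin the amplitude ratio to $\sqrt{2}$ --- exactly the paper's computation $|x|^2=2|b|^2$ with the columns relabelled. The only differences are cosmetic (your real-valued witness map for $|+_3\rangle\to\ket{\psi}$ avoids the paper's phase-correcting unitary), though you might add the one-line observation that no column can be routed to the row outside the support of $\ket{\phi}$, since a single such column contradicts completeness and two or more force the output rank below $2$.
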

\begin{proof}
We will prove the statement by studying states with a fixed coherence
rank that can be obtained from the state $\ket{+_{3}}$ via FI operations.
We start with states of coherence rank 1, i.e., incoherent states
$\ket{i}$. The state $\ket{0}$ can be obtained from the state $\ket{+_{3}}$
via the erasing operation which maps every state onto $\ket{0}$.
As mentioned earlier, this operation is fully incoherent. Any incoherent
state $\ket{i}$ can be obtained from $\ket{+_{3}}$ by first performing
the erasing operation, and then applying a fully incoherent unitary
which transforms $\ket{0}$ to $\ket{i}$.

Before we study states of coherence rank 2, we first consider the
case of coherence rank 3 for simplicity. Since the initial state $\ket{+_{3}}$
has coherence rank 3, we can apply Theorem \ref{thm:FIconv}, stating
that FI transformation between states with the same coherence rank
must be necessarily unitary, i.e., the final state must be $\ket{+_{3}}$
or FI unitary equivalent of it.

We now come to the most difficult part of the proof, where we will
show that for final states of coherence rank 2, the only possible
pure state is given by
\begin{equation}
\ket{\psi}=\sqrt{2/3}|0\rangle+\sqrt{1/3}|1\rangle\label{eq:psi}
\end{equation}
and FI unitary equivalents of it. In particular, we will show that
no other pure state of coherence rank 2 can be obtained from $\ket{+_{3}}$
via FI operations.

According to Theorem \ref{thm:FI}, the only FI maps that are able
to produce states of coherence rank 2 must have Kraus operators of
the form
\begin{equation}
K_{i}=P\left(\begin{array}{ccc}
a_{i} & 0 & c_{i}\\
0 & b_{i} & 0\\
0 & 0 & 0
\end{array}\right)P',
\end{equation}
where $P$ and $P'$ are arbitrary (but fixed $\forall i$) permutations.
Since permutations are FI unitaries and the state $\ket{+_{3}}$ is
invariant under any permutation it is thus sufficient to study Kraus
operators of the form
\begin{equation}
K_{i}=\left(\begin{array}{ccc}
a_{i} & 0 & c_{i}\\
0 & b_{i} & 0\\
0 & 0 & 0
\end{array}\right).
\end{equation}
The completeness condition $\sum_{i}K_{i}^{\dagger}K_{i}=\openone$
imposes that
\begin{align}
\sum_{i}a_{i}c_{i}^{\ast} & =0,\nonumber \\
\sum_{i}|z_{i}|^{2} & =1,\quad z=a,b,c.
\end{align}
Each outcome of such FI maps leads to the following unnormalized state
\begin{equation}
K_{i}|+_{3}\rangle\propto\left(\begin{array}{c}
a_{i}+c_{i}\\
b_{i}\\
0
\end{array}\right).
\end{equation}
Assume now that a deterministic transformation to the state $|\phi\rangle$
is possible. This means that $K_{i}|+_{3}\rangle\propto|\phi\rangle$
$\forall i$. In particular, this implies that $|\phi\rangle\propto\sum_{i}K_{i}|+_{3}\rangle$
and, thus
\begin{equation}
|\phi\rangle\propto\left(\begin{array}{c}
\sum_{i}x_{i}\\
\sum_{i}b_{i}\\
0
\end{array}\right)\quad(x_{i}=a_{i}+c_{i}),
\end{equation}
with the above normalization conditions imposing that $\sum_{i}|x_{i}|^{2}=2$
and $\sum_{i}|b_{i}|^{2}=1$. Moreover, the fact that $K_{i}|+_{3}\rangle\propto K_{j}|+_{3}\rangle$
$\forall i\neq j$ demands that $x_{i}=k_{i}x$ and $b_{i}=k_{i}b$
for some complex numbers $\{k_{i}\}$, $x$ and $b$. Thus,
\begin{equation}
|\phi\rangle\propto\sum_{i}k_{i}\left(\begin{array}{c}
x\\
b\\
0
\end{array}\right)\propto\left(\begin{array}{c}
x\\
b\\
0
\end{array}\right).
\end{equation}
Furthermore, the normalization conditions then lead to $|x|^{2}=2|b|^{2}$.
This means that it must hold that
\begin{equation}
|\phi\rangle\propto\left(\begin{array}{c}
\sqrt{2}e^{i\alpha}\\
1\\
0
\end{array}\right).
\end{equation}
After normalization this state is FI unitary equivalent to the state
$\ket{\psi}$ given in Eq.~(\ref{eq:psi}). These arguments prove
that apart from $\ket{\psi}$ and FI unitary equivalents no other
pure state with coherence rank 2 can be obtained from $\ket{+_{3}}$
via FI operations.

To see that the transformation to $\ket{\psi}$ is indeed possible,
we choose an FI operation given by the following two Kraus operators:
\begin{align}
K_{1} & =\left(\begin{array}{ccc}
i/\sqrt{2} & 0 & 1/\sqrt{2}\\
0 & 1/\sqrt{2} & 0\\
0 & 0 & 0
\end{array}\right),\nonumber \\
K_{2} & =\left(\begin{array}{ccc}
1/\sqrt{2} & 0 & i/\sqrt{2}\\
0 & 1/\sqrt{2} & 0\\
0 & 0 & 0
\end{array}\right).
\end{align}
This map acting on $|+_{3}\rangle$ leads to the state $\sqrt{2/3}e^{i\pi/4}|0\rangle+\sqrt{1/3}|1\rangle$,
which can be then transformed to the aforementioned state with an
FI unitary.
\end{proof}
Theorem \ref{thm:QutritConvertibility} provides severe constraints
on the pure state conversion via FI operations. In particular, it
shows that via FI operations the state $\ket{+_{3}}$ can only be
converted into three different pure states, and their FI unitary equivalents.
Being FI operations a particular class of general incoherent operations,
it seems natural to compare their power. In fact, the latter class
of transformations shows a much richer structure. The question whether
two pure states can be converted into each other via incoherent operations
has been recently addressed in \cite{Du2015b,Winter2015}. The answer
is closely related to the corresponding problem in entanglement theory
which was solved by Nielsen in \cite{Nielsen1999} relying on the
theory of majorization. For two density matrices $\rho$ and $\sigma$
with corresponding eigenvalues $\{p_{i}\}_{i=1}^{d}$ and $\{q_{j}\}_{j=1}^{d}$,
the majorization relation $\rho\succ\sigma$ means that the inequality
\begin{equation}
\sum_{i=1}^{t}p_{i}\geq\sum_{j=1}^{t}q_{j}
\end{equation}
holds true for all $t\leq d$.

It has been shown in \cite{Winter2015} that given two pure states
$\ket{\psi}$ and $\ket{\phi}$ with the same coherence rank, there
exists an incoherent operation transforming $\ket{\psi}$ into $\ket{\phi}$
if and only if $\Delta[\rho_{\phi}]\succ\Delta[\rho_{\psi}]$. Here,
$\Delta[\rho]=\sum_{i}\bk{i|\rho|i}\proj{i}$ denotes full dephasing
in the incoherent basis. The earlier reference \cite{Du2015b} states
this same result without the assumption that both states have the
same coherence rank. However, as pointed out recently by Winter and
Yang \cite{Winter2015} and by Chitambar and Gour (see page 9 in \cite{Chitambar2016}),
the part of the proof showing the necessity of the majorization condition
for a transformation to be possible has serious flaws. Notwithstanding,
its sufficiency is clearly established. With this we can compare the
power of incoherent and FI transformations. Theorem \ref{thm:FIconv}
already shows a limitation of the latter. While for any pair of incoherent-unitary
inequivalent states of the same coherence rank, the transformation
$|\psi\rangle\to|\phi\rangle$ is possible by incoherent operations
whenever $\Delta[\rho_{\phi}]\succ\Delta[\rho_{\psi}]$ is fulfilled,
this is not possible with FI operations. Moreover, $\Delta[\rho_{\phi}]\succ\Delta[\rho_{+_{3}}]$
for any qutrit-state $|\phi\rangle$ and, therefore, $|+_{3}\rangle$
can be transformed to any qutrit-state by incoherent operations. However,
we have seen in Theorem \ref{thm:QutritConvertibility} that the only
coherence-rank-two state obtainable from this state by deterministic
FI manipulation is $\sqrt{2/3}|0\rangle+\sqrt{1/3}|1\rangle$ (and
its FI unitary equivalents). Thus, this shows that even under the
constraint that the coherence rank decreases the majorization condition
is far from being a sufficient condition for FI transformations.

\subsubsection{Pure state stochastic transformations}

One may ask for the potential of FI operations for stochastic manipulation
of states. Being GI operations a subset of FI operations, the corresponding
optimal protocols of Theorem \ref{thm:sgitrans} provide lower bounds
on the maximal probability of conversion under SFI operations. In
general, these protocols are suboptimal as, for example, from the
previous section we know that $|+_{3}\rangle\to\sqrt{2/3}|0\rangle+\sqrt{1/3}|1\rangle$
can be realized in this case with probability one. Notice, however,
that these bounds can be strengthened since the FI setting allows
to apply a permutation to the input state before implementing the
protocol. With this observation and the insight of Theorem \ref{thm:sgitrans}
we can characterize the optimal probability for SFI transformations
among states with the same coherence rank and establish non-trivial
lower bounds for the case of coherence-rank-decreasing operations
(obviously, a transformation to a state with a higher coherence rank
cannot be accomplished with nonzero probability).
\begin{thm}
The optimal probability of transforming a state into another by FI
operations fulfills
\begin{equation}
P(\rho_{\psi}\to\rho_{\phi})\geq\max_{P}\min_{i}\frac{\bk{i|P\rho_{\psi}P^{\dag}|i}}{\bk{i|\rho_{\phi}|i}},
\end{equation}
where $P$ is any permutation (and the minimization goes over all
$i$ such that $\bk{i|\rho_{\phi}|i}\neq0$ if $r(\psi)>r(\phi)$).
Moreover, the inequality becomes an equality if $r(\psi)=r(\phi)$. \end{thm}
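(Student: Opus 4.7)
The plan is to establish the lower bound by explicit construction and then, in the case $r(\psi)=r(\phi)$, to prove the matching upper bound by showing that every SFI protocol realising the transformation factors through a permutation and an SGI map of the type optimised in Theorem \ref{thm:sgitrans}.

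For the lower bound, I would fix an arbitrary permutation $P$ and note that, since $P$ is an FI unitary, the two-step protocol consisting of applying $P$ deterministically to $\rho_\psi$ followed by the optimal SGI transformation of Theorem \ref{thm:sgitrans} on $P\rho_\psi P^\dagger$ is itself an SFI protocol. Theorem \ref{thm:sgitrans} yields a nonzero success probability iff $R(\phi)\subseteq PR(\psi)$, in which case this probability equals $\min_{i\in R(\phi)}\langle i|P\rho_\psi P^\dagger|i\rangle/\langle i|\rho_\phi|i\rangle$. Maximising over $P$ gives the claimed lower bound; permutations violating $R(\phi)\subseteq PR(\psi)$ contribute zero because some diagonal entry in the numerator vanishes.

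For the converse when $r(\psi)=r(\phi)$, let $\Lambda[\cdot]=\sum_i K_i(\cdot)K_i^\dagger$ be an SFI map with $\Lambda[\rho_\psi]=k\rho_\phi$. Since the output is rank one, each $K_i|\psi\rangle$ must be proportional to $|\phi\rangle$. By Theorem \ref{thm:FI} all $K_i$ share a common nonzero-entry pattern; the equality $r(\psi)=r(\phi)$ then forces this pattern, restricted to columns indexed by $R(\psi)$, to realise a bijection $\sigma\colon R(\psi)\to R(\phi)$, for otherwise $K_i|\psi\rangle$ would have coherence rank strictly smaller than $r(\phi)$. Extending $\sigma$ arbitrarily to a full permutation $P$ of $\{1,\dots,d\}$, one can write $K_i\Pi_{R(\psi)}=PD_i$ for diagonal matrices $D_i$ supported on $R(\psi)$; since only the columns indexed by $R(\psi)$ act on $\rho_\psi$, this yields $\Lambda[\rho_\psi]=P\bigl(\sum_i D_i\rho_\psi D_i^\dagger\bigr)P^\dagger$. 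The induced map $\tilde\Lambda[\cdot]=\sum_i D_i(\cdot)D_i^\dagger$ is a valid SGI operation (Kraus operators diagonal, $\sum_i D_i^\dagger D_i\leq\sum_i K_i^\dagger K_i\leq\openone$) and satisfies $\tilde\Lambda[\rho_\psi]=k\rho_{P^\dagger\phi}$.

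Applying Theorem \ref{thm:sgitrans} to $\tilde\Lambda$ bounds $k\leq\min_{i\in R(P^\dagger\phi)}\langle i|\rho_\psi|i\rangle/\langle i|\rho_{P^\dagger\phi}|i\rangle$; the relabelling $j=Pi$ turns this into $k\leq\min_{j\in R(\phi)}\langle j|P\rho_\psi P^\dagger|j\rangle/\langle j|\rho_\phi|j\rangle$, and taking the maximum over the finitely many permutations $P$ matches the lower bound, giving equality. The main obstacle is the converse step: one must justify that the common Kraus pattern really does induce a bijection on $R(\psi)$ (rather than collapsing several columns onto the same row, which is compatible with $r(\phi)<r(\psi)$ but not with $r(\phi)=r(\psi)$), and that the bookkeeping between $R(\psi)$, $R(\phi)$, and their images under the chosen $P$ produces a bona fide SGI family $\{D_i\}$. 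Both points rest on the coherence-rank matching together with the original FI normalisation, but organising the argument cleanly is the only nontrivial work.
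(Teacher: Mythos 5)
Your proof is correct and follows essentially the same route as the paper's: the lower bound comes from composing a permutation (an FI unitary) with the optimal SGI protocol of Theorem \ref{thm:sgitrans}, and the equality case from showing that any coherence-rank-preserving SFI map factors as a permutation conjugating an SGI map, to which Theorem \ref{thm:sgitrans} is then applied. You are in fact slightly more careful than the paper on the one delicate point: the paper simply asserts that some Kraus operator must be full-rank and hence $K_i=D_iP$, whereas you argue via the common nonzero-entry pattern that it must restrict to a bijection $R(\psi)\to R(\phi)$, which also handles the case $r(\psi)<d$ cleanly.
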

\begin{proof}
As discussed above, the bound is obvious from Theorem \ref{thm:sgitrans}
as we can use the protocol of the GI setting optimized over all FI
unitary equivalents of the input state. To see that this gives the
actual optimal probability when the input and output state have the
same coherence rank, notice that in this case at least one Kraus operator
of the corresponding SFI map must be full-rank. Since all Kraus operators
must be of the same form, this means that they all must take the form
$K_{i}=D_{i}P$ with some permutation $P$ and diagonal matrices $D_{i}$.
Thus, the most general SFI maps to achieve such transformation are
given by $\Lambda_{\mathrm{gi}}(P\rho_{\psi}P^{\dag})$ where $\Lambda_{\mathrm{gi}}$
is any SGI map and $P$ is any permutation.
\end{proof}

\subsubsection{Many-copy transformations}

The most striking limitation of state manipulation under GI operations
is the generic impossibility of distillation and dilution. These are
a consequence of the obstruction to any form of activation proven
in Lemma \ref{noactivation}. It would be very interesting to see
whether, although deterministic one-copy transformations are rather
limited as well for FI operations, distillation and dilution protocols
are possible in this case. Interestingly, we show in the following
that Lemma \ref{noactivation} is no longer true for FI operations.
This leaves open the possibility that many-copy transformations in
this case might have a rich structure.

To prove our claim, it suffices to consider a counterexample. Take
the obvious extension to $\mathbb{C}^{4}$ of the protocol implementing
$|+_{3}\rangle\to\sqrt{2/3}|0\rangle+\sqrt{1/3}|1\rangle$ to see
that $|+_{4}\rangle\to(\sqrt{2}|0\rangle+|1\rangle+|3\rangle)/2$
is possible by FI. For this, a map with Kraus operators given by
\begin{align}
K_{1} & =\left(\begin{array}{cccc}
i/\sqrt{2} & 0 & 1/\sqrt{2} & 0\\
0 & 1/\sqrt{2} & 0 & 0\\
0 & 0 & 0 & 0\\
0 & 0 & 0 & 1/\sqrt{2}
\end{array}\right),\nonumber \\
K_{2} & =\left(\begin{array}{cccc}
1/\sqrt{2} & 0 & i/\sqrt{2} & 0\\
0 & 1/\sqrt{2} & 0 & 0\\
0 & 0 & 0 & 0\\
0 & 0 & 0 & 1/\sqrt{2}
\end{array}\right),
\end{align}
together with an FI unitary transformation does the job. This shows
that $|+_{2}\rangle^{\otimes2}\to(\sqrt{2}|00\rangle+|01\rangle+|11\rangle)/2$
can be done with two copies. If we now trace out the second qubit
we see that this activates the transformation $|+_{2}\rangle\to\rho$
where
\begin{equation}
\rho=\left(\begin{array}{cc}
3/4 & 1/4\\
1/4 & 1/4
\end{array}\right).
\end{equation}
This transformation is clearly impossible with one copy because of
similar arguments used before. Since $\rho$ is full-rank, the transformation
$|+_{2}\rangle\to\rho$ could only be implemented by an FI operation
with full-rank Kraus operators. This means that we can only use compositions
of permutations and GI maps but Corollary \ref{puretomixedGI} tells
us that this would require the diagonal entries of $|+_{2}\rangle\langle+_{2}|$
and $\rho$ to be equal (up to permutations), which is not the case.

\section{Relation to other concepts of quantum coherence}

In this section we will discuss the relation of genuinely incoherent
operations and fully incoherent operations to other concepts of coherence.
We start with a list of alternative incoherent operations recently
proposed in the literature:
\begin{itemize}
\item Maximally incoherent operations (MIO) \cite{Aberg2006}: operations
which preserve the set of incoherent states, i.e., $\Lambda(\rho)\in\mathcal{I}$
for all $\rho\in\mathcal{I}$.
\item Dephasing-Covariant incoherent operations (DIO) \cite{Chitambar2016,Marvian2016}:
operations which commute with dephasing, i.e., $\Delta[\Lambda(\rho)]=\Lambda(\Delta[\rho])$.
\item Translationally invariant operations (TIO) \cite{Gour2009,Marvian2014a,Marvian2015}:
operations which commute with the unitary translation $e^{-itH}$
for some nondegenerate Hermitian operator $H$ diagonal in the incoherent
basis, i.e., $\Lambda[e^{-itH}\rho e^{itH}]=e^{-itH}\Lambda[\rho]e^{itH}$
\footnote{Some references (cf.\ Refs.\ \cite{Chitambar2016,Marvian2016})
allow for degenerate operators $H$. This is a legitimate choice which
is relevant in certain scenarios. However, this induces a different
set of free states and leads to a different resource theory in which
coherence is measured relative to the eigenspaces of $H$. This is
why we do not consider here this possibility.}.
\item Strictly incoherent operations (SIO) \cite{Winter2015,Yadin2015a}:
operations with a Kraus decomposition $\{K_{i}\}$ such that each
Kraus operator commutes with dephasing, i.e., $\Delta[K_{i}\rho K_{i}^{\dagger}]=K_{i}\Delta[\rho]K_{i}^{\dagger}$.
\item Physical incoherent operations (PIO) \cite{Chitambar2016}: maps with
Kraus operators of the form $K_{j}=\sum_{x}e^{i\theta_{x}}|\pi_{j}(x)\rangle\bra{x}P_{j}$
and their convex combinations. Here, $\pi_{j}$ are permutations and
$P_{j}$ is an orthogonal and complete set of incoherent projectors.
\end{itemize}
The following inclusions have been proven in the aforementioned references
(see e.g. \cite{Chitambar2016})
\begin{align}
\mathrm{PIO} & \subset\mathrm{SIO}\subset\mathrm{DIO}\subset\mathrm{MIO,}\\
\mathrm{PIO} & \subset\mathrm{SIO}\subset\mathrm{IO}\subset\mathrm{MIO,}.%\\
%\mathrm{IO} & \nsubset\mathrm{DIO},\,\,\,\,\,\,\,\,\mathrm{DIO}\nsubset\mathrm{IO.}
\end{align}

In the following we will show that FIO and SIO are not subsets of
each other, i.e.,
\begin{equation}
\mathrm{FIO}\nsubset\mathrm{SIO},\,\,\,\,\,\,\,\,\,\,\,\,\,\,\,\,\,\mathrm{SIO}\nsubset\mathrm{FIO.}
\end{equation}
For proving $\mathrm{FIO}\nsubset\mathrm{SIO}$, note that SI operations
with Kraus operators $\{K_{i}\}$ are characterized by the operators
$K_{i}^{\dagger}$ being also incoherent $\forall i$ \cite{Winter2015}.
This amounts to the fact that the Kraus operators not only have at
most one non-zero entry per column but also at most one non-zero entry
per row. FI operations with non-diagonal rank-deficient Kraus operators
clearly fail to fulfill this requirement. Notice that this also implies
that $\mathrm{FIO}\nsubset\mathrm{PIO}$. Finally, it is easy to see
that $\mathrm{SIO}\nsubset\mathrm{FIO}$, since the Kraus operators
of $\mathrm{SIO}$ need not have the same form.

We proceed to show that FI operations are a subset of DI operations,
i.e.,
\begin{equation}
\mathrm{FIO}\subset\mathrm{DIO}.
\end{equation}
Lemma 17 of \cite{Chitambar2016} states that $\Lambda$ is DIO if
and only if $\Lambda(|x\rangle\langle x|)$ is incoherent and $\Delta(\Lambda(|x\rangle\langle x'|)=0$
$\forall x\neq x'$. The first condition is clearly fulfilled by FI
maps. To see that the second condition also holds, notice that the
$(j,j)$ entry of $\Lambda(|x\rangle\langle x'|)$ is given by $\sum_{i}K_{i}(j,x)K_{i}^{\ast}(j,x')$
if $\Lambda$ has Kraus operators $\{K_{i}\}$. For FI maps, either
$K_{i}(j,x)K_{i}^{\ast}(j,x')=0$ $\forall i$ (and the second condition
is then trivially fulfilled for the $(j,j)$ entry) or $K_{i}(m,x),K_{i}(m,x')=0$
$\forall i$ and $\forall m\neq j$. In this last case the $(x,x')$
entry of the normalization condition $\sum_{i}K_{i}^{\dag}K_{i}=\openone$
reads then precisely $\sum_{i}K_{i}^{\ast}(j,x)K_{i}(j,x')=0$, as
we wanted to prove.

Moreover, it is easy to see the following relations:
\begin{align}
\mathrm{FIO} & \subset\mathrm{IO},\\
\mathrm{GIO} & \subset\mathrm{SIO},\\
\mathrm{PIO} & \nsubset\mathrm{GIO}.
\end{align}
While $\mathrm{FIO}\subset\mathrm{IO}$ is obvious, $\mathrm{GIO}\subset\mathrm{SIO}$
follows from the fact that genuinely incoherent Kraus operators are
all diagonal. Finally, $\mathrm{PIO}\nsubset\mathrm{GIO}$ holds true
because the Kraus operators of PIO need not be diagonal, and that
PI operations allow for distillation \cite{Chitambar2016}. For qubit
and qutrit maps it further holds that $\mathrm{GIO}\subset\mathrm{PIO}$,
because in this situation GI operations are mixed unitaries, see Appendix
\ref{sec:MathematicalStructure}. However, this inclusion breaks down in higher dimensions and, in general, we have that $\mathrm{GIO}\nsubset\mathrm{PIO}$ as well. To see this, consider the GI map with Kraus operators given by
\begin{equation}
K_1=diag(1,0,\cos\theta,\cos\theta),\quad K_2=diag(0,1,sin\theta,i\sin\theta)
\end{equation}
for some value of $\theta\in(0,\pi/2)$. It has been shown in \cite{Chitambar2016} that PIO correspond to convex combinations of maps with Kraus operators of the form $K_j=U_jP_j$ $\forall j$, where the $U_j$ are incoherent unitaries and the $P_j$ form an orthogonal and complete set of incoherent projections. If a PI map is also GI without loss of generality we can take the $U_j$ to be diagonal. Thus, according to Theorem \ref{kraus}, if the above map was also in PIO, it is necessary that there exists a linear combination of the Kraus operators $L=\alpha K_1+\beta K_2$ ($\alpha,\beta\neq0$) such that for every $i\in\{1,2,3,4\}$ we either have $L_{ii}=0$ or $|L_{ii}|=\sqrt{p}$ for some fixed $p\in(0,1]$. Since
\begin{equation}
L=diag(\alpha,\beta,\alpha\cos\theta+\beta\sin\theta,\alpha\cos\theta+i\beta\sin\theta),
\end{equation}
the above condition can only hold for $i=1,2$ if $|\alpha|=|\beta|=\sqrt{p}$. This furthermore imposes that both $|L_{33}|=\sqrt{p}$ and $|L_{44}|=\sqrt{p}$ since we cannot have then that these entries vanish. However, the first condition then leads to $\textrm{Re}(\alpha^*\beta)=0$ and the second to $\textrm{Im}(\alpha^*\beta)=0$. Hence, we would need that $\alpha=\beta=0$, which cannot be. Thus, the given GI map is not in PIO.

Last, it holds that FIO and TIO are not subsets of each other. To see that
$\mathrm{FIO}\nsubset\mathrm{TIO}$, it suffices to note that permutations
are not TIO since the only unitary operations in TIO are diagonal
\cite{Marvian2016}. In order to prove that $\mathrm{TIO}\nsubset\mathrm{FIO}$,
one can consider the fully depolarizing qubit map, which maps every
qubit state to $\openone/2$. This map is clearly in TIO and has Kraus
operators $K_{i}=\sigma_{i}/2$ with $i=0,1,2,3$ where $\sigma_{0}$
is the identity and the others are the standard Pauli matrices. Since
the Kraus operators do not have the same form the fully depolarizing
map cannot be an FI operation. It was further proven in \cite{Streltsov2015d}
that $\mathrm{GIO}\subset\mathrm{TIO}$, where the inclusion is strict.

\section{Conclusions}

In this paper we provide a detailed study of the framework of genuine
coherence introduced in ref. \cite{Streltsov2015d}. We found a general
characterization of genuinely incoherent operations via Schur maps,
and used this result to prove strong limitations of this framework.
In particular, genuinely incoherent operations do not have a golden
unit, and also do not allow for asymptotic state transformation or
dilution. On the single-copy level, genuinely incoherent operations
only allow for transformations to states with the same diagonal elements.
Nevertheless, more general transformations are possible via stochastic
GI operations. One of our main results it the optimal probability
for this procedure in Theorem~\ref{thm:sgitrans}.

We also introduced and studied the class of fully incoherent operations:
these are all operations which are incoherent in any Kraus decomposition.
We provide a complete characterization of this set of operations,
and show that it is strictly larger than the set of genuinely incoherent
operations. On the other hand, we show that state transformations
are also very limited in this framework: there is no ``maximally
coherent state'' which would allow for transformations to all other
states, and deterministic pure state transformations are only possible
between very restricted families of states. It remains open, however,
whether distillation and dilution procedures are generally possible
in this framework. Since pure incoherent states can be transformed
into each other via fully incoherent operations, this concept characterizes
speakable coherence.

The physical setting in other resource theories such as entanglement
clearly defines the set of free operations. However, in the current
efforts to develop a resource theory of coherence, although it is
clear that free states should correspond to incoherent states, the
notion of free operations is not completely clear from the physical
context. Actually, it has been recently discussed \cite{Chitambar2016,Marvian2016}
that the incoherent operations considered in the standard resource
theory of coherence of Baumgratz \emph{et al}. \cite{Baumgratz2014},
although mathematically consistent, have not been provided with a
clear physical interpretation. In this article we have considered
two possible sets of free operations for a resource theory of coherence
that stem from a clear physical motivation. GI operations leave all
incoherent states invariant and, thus, account for physical situations
in which the free states are not interchangeable like, for instance,
in the presence of energy constraints (this situation is relevant
in several applications and it is referred to as unspeakable coherence,
cf.\ \cite{Marvian2016}). On the other hand, FI operations are the
most general class of maps that do not allow for hidden coherence,
i.\ e.\ they are incoherent irrespectively of the Kraus representation.
Our results show that these sets of free operations are very restricted
for deterministic state manipulation. This could lead to the conclusion
that any reasonably powerful resource theory of quantum coherence
must contain free operations that allow to interchange incoherent
states and that can potentially create coherence in some experimental
realization. This is in particular true if the theory should have
a golden unit, i.e., a single state from which all other states can
be created for free. This result leads to the question which additional
minimal requirements a resource theory of coherence must have, apart
from the aforementioned property. In particular, is it enough to add
a finite number of free operations which can potentially create coherence?
We leave this question open for future research.
\begin{acknowledgments}
The authors are very grateful to Eric Chitambar for suggesting the example that shows that $\mathrm{GIO}\nsubset\mathrm{PIO}$. AS acknowledges funding by the Alexander von Humboldt-Foundation.
The research of JIdV was funded by the Spanish MINECO through grants
MTM2014-54692 and MTM2014-54240-P and by the Comunidad de Madrid through
grant QUITEMAD+CM S2013/ICE-2801.
\end{acknowledgments}

\appendix
%dummy comment inserted by tex2lyx to ensure that this paragraph is not empty%dummy comment inserted by tex2lyx to ensure that this paragraph is not empty%dummy comment inserted by tex2lyx to ensure that this paragraph is not empty%dummy comment inserted by tex2lyx to ensure that this paragraph is not empty

\section{\label{sec:MathematicalStructure}Mathematical structure of the set
of GI maps}

A genuinely incoherent operation $\Lambda_{\mathrm{gi}}$ will be
called \emph{extremal in the set of GI operations} if it cannot be
written as
\begin{equation}
\Lambda_{\mathrm{gi}}[\rho]=p\Lambda_{\mathrm{gi}}'[\rho]+(1-p)\Lambda_{\mathrm{gi}}''[\rho],
\end{equation}
with GI operations $\Lambda'_{\mathrm{gi}}\neq\Lambda''_{\mathrm{gi}}$
and probability $0<p<1$. It is straightforward to see that the set
of genuinely incoherent operations is compact and convex, which ensures
that any GI operation can be written as a convex combination of extremal
GI operations \cite{Rockafellar1970}.

A unital operation $\Lambda_{\mathrm{u}}$ is an operation which preserves
the maximally mixed state: $\Lambda_{\mathrm{u}}(\openone/d)=\openone/d$.
It is easy to see that any GI operation is unital, but there exist
unital operations which are not GI. We will call a GI operation \emph{extremal
in the set of unital operations} if it cannot be written as
\begin{equation}
\Lambda_{\mathrm{gi}}[\rho]=p\Lambda_{\mathrm{u}}'[\rho]+(1-p)\Lambda_{\mathrm{u}}''[\rho],
\end{equation}
with some unital operations $\Lambda_{\mathrm{u}}'\neq\Lambda_{\mathrm{u}}''$
and probability $0<p<1$.

Clearly, a GI operation which is extremal in the set of unital operations
must also be extremal in the set of GI operations. As we will see
in the following theorem, the inverse of this statement is true as
well.
\begin{lem}
\label{thm:extremal}A GI map is extremal in the set of GI maps if
and only if it is extremal in the set of unital maps.\end{lem}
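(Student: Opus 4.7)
The plan is to prove the two directions separately, with one being essentially immediate and the other requiring a short argument based on the fact that incoherent basis states are pure.

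For the easy direction, suppose $\Lambda_{\mathrm{gi}}$ is extremal in the set of unital maps. Any decomposition of $\Lambda_{\mathrm{gi}}$ as a convex combination of GI maps $p\Lambda'_{\mathrm{gi}} + (1-p)\Lambda''_{\mathrm{gi}}$ is in particular a decomposition into unital maps, since every GI map is unital (indeed, the identity $\openone/d$ is diagonal and hence preserved by any GI operation). Extremality in the unital set therefore forces $\Lambda'_{\mathrm{gi}} = \Lambda''_{\mathrm{gi}}$, giving extremality in the GI set.

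For the nontrivial direction, suppose $\Lambda_{\mathrm{gi}}$ is extremal in the GI set and assume it admits a decomposition $\Lambda_{\mathrm{gi}} = p\Lambda'_{\mathrm{u}} + (1-p)\Lambda''_{\mathrm{u}}$ into unital maps. The key observation is to evaluate both sides on each incoherent basis element: for every $i$,
\begin{equation}
\ket{i}\bra{i} = \Lambda_{\mathrm{gi}}[\ket{i}\bra{i}] = p\Lambda'_{\mathrm{u}}[\ket{i}\bra{i}] + (1-p)\Lambda''_{\mathrm{u}}[\ket{i}\bra{i}].
\end{equation}
Since $\ket{i}\bra{i}$ is a pure state and the outputs of $\Lambda'_{\mathrm{u}}$ and $\Lambda''_{\mathrm{u}}$ are density matrices, the convex decomposition of a pure state into two states forces both components to coincide with it, i.e.\ $\Lambda'_{\mathrm{u}}[\ket{i}\bra{i}] = \Lambda''_{\mathrm{u}}[\ket{i}\bra{i}] = \ket{i}\bra{i}$ for every $i$. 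By linearity, $\Lambda'_{\mathrm{u}}$ and $\Lambda''_{\mathrm{u}}$ preserve every convex combination of incoherent basis states, hence every incoherent state. Thus both $\Lambda'_{\mathrm{u}}$ and $\Lambda''_{\mathrm{u}}$ are themselves GI maps. Extremality of $\Lambda_{\mathrm{gi}}$ within the GI set then forces $\Lambda'_{\mathrm{u}} = \Lambda''_{\mathrm{u}}$, which is the desired contradiction (or, equivalently, yields the extremality in the unital set).

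There is no substantial obstacle here. The only subtle point is recognizing that purity of the $\ket{i}\bra{i}$ is what upgrades a unital decomposition into an incoherent-state-preserving one, which in turn upgrades it to a GI decomposition; once this observation is made, the rest of the argument is formal. A minor bookkeeping note is that this argument relies only on the defining property of GI maps in item 1 of Theorem \ref{thm:Schur} and does not require the Schur-product characterization.
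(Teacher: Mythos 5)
Your proof is correct, but the nontrivial direction runs along a genuinely different line than the paper's. The paper argues at the level of Kraus operators: writing $\Lambda_{\mathrm{gi}}=p\Phi+(1-p)\Phi'$, it notes that $\{\sqrt{p}K_{i},\sqrt{1-p}K'_{j}\}$ is a Kraus representation of $\Lambda_{\mathrm{gi}}$, and then invokes the characterization (item 2 of Theorem \ref{thm:Schur}) that \emph{every} Kraus representation of a GI map consists of diagonal operators, forcing $\Phi$ and $\Phi'$ to be GI. You instead work at the level of states: you evaluate the decomposition on each pure incoherent basis state $\ket{i}\bra{i}$ and use the fact that pure states are extreme points of the set of density matrices to conclude $\Lambda'_{\mathrm{u}}[\ket{i}\bra{i}]=\Lambda''_{\mathrm{u}}[\ket{i}\bra{i}]=\ket{i}\bra{i}$, whence by linearity both components preserve all incoherent states and are therefore GI. Your route is more elementary and self-contained, since it uses only the defining property of GI maps (item 1 of Theorem \ref{thm:Schur}) rather than the Kraus-operator characterization imported from the earlier reference; the paper's route is shorter given that characterization is already in hand. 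It is worth noting that neither argument actually uses unitality of $\Phi$ and $\Phi'$ (only that they are channels, so their outputs are density matrices), so both in fact establish the stronger statement that a GI map extremal in the GI set is extremal in the full set of quantum channels.
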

\begin{proof}
As mentioned above the lemma, the implication from right to left is
clear. To see the other direction, we show that every map which is
not extremal in the set of unital maps is also not extremal in the
set of GI maps. Assume that $\Lambda$ is a GI map which is not extremal
in the set of unital maps, i.\ e.\ there exist unital maps $\Phi$
and $\Phi'$ such that $\Lambda=p\Phi+(1-p)\Phi'$ for some $p\in(0,1)$.
This means that if $\{K_{i}\}$ ($\{K'_{j}\}$) is a Kraus representation
of $\Phi$ ($\Phi'$), then $\{\sqrt{p}K_{i},\sqrt{1-p}K'_{j}\}$
is a Kraus representation of $\Lambda$. Since every Kraus representation
of a GI map is diagonal, the operators $\{K_{i}\}$ and $\{K'_{j}\}$
have to be then all diagonal. Thus, $\Phi$ and $\Phi'$ must be GI
maps as well and $\Lambda$ is therefore not extremal in the set of
GI maps.
\end{proof}
This lemma reveals a close connection between GI operations and unital
operations, and will be used to prove several important statements
on the structure of these maps in the following.

It has been noted in \cite{Streltsov2015d} that mixed-unitary maps,
i.\ e.
\begin{equation}
\Lambda(\rho)=\sum_{i}p_{i}U_{i}\rho U_{i}^{\dagger}\label{eq:MU}
\end{equation}
with $\{p_{i}\}$ probabilities and $\{U_{i}\}$ unitary matrices,
are GI maps when the matrices $U_{i}$ are all diagonal. It has been
moreover shown in \cite{Streltsov2015d} that the converse is true
for $d=2$, i.\ e.\ every GI operation on $\mathcal{L}(\mathbb{C}^{2})$
is a mixed-unitary map. However, the question was left open for $d>2$.
We will solve this question in the following. The case $d=3$ turns
out to be equivalent to exercise 4.1 in \cite{Watrous2011}. For the
sake of completeness we provide a full proof in the next theorem.
\begin{thm}
\label{thm:qutrit}Every GI operation on $\mathcal{L}(\mathbb{C}^{3})$
is mixed-unitary, i.e., can be written as in Eq.~(\ref{eq:MU}). \end{thm}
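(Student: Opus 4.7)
The plan is to exploit the Schur-map characterization of Theorem \ref{thm:Schur} and reduce the claim to a geometric statement about the set $\mathcal{E}_3$ of $3\times 3$ complex PSD matrices with unit diagonal (i.e., correlation matrices). A diagonal unitary $U=\mathrm{diag}(u_1,u_2,u_3)$ with $|u_k|=1$ induces the Schur map with matrix $M_{kl}=u_k\bar u_l$; equivalently $M=|u\rangle\langle u|$ is a rank-one element of $\mathcal{E}_3$, and every rank-one element of $\mathcal{E}_3$ has this form. Since the correspondence $\Lambda\leftrightarrow A$ from Theorem \ref{thm:Schur} is affine, mixed-unitary GI maps (with diagonal unitary terms) correspond exactly to those $A\in\mathcal{E}_3$ expressible as convex combinations of rank-one elements of $\mathcal{E}_3$. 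The theorem is therefore equivalent to the assertion that $\mathcal{E}_3$ is the convex hull of its rank-one elements.

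The set $\mathcal{E}_3$ is convex and compact (closed and bounded, since $|A_{kl}|\leq 1$), so in finite dimensions it equals the convex hull of its extreme points, and it suffices to show that every extreme point has rank one. I would argue this by a perturbation argument: let $A\in\mathcal{E}_3$ have rank $r\geq 2$, let $P$ be the orthogonal projector onto $\mathrm{range}(A)$, and consider the real vector space $\mathcal{H}_P=\{E:E=E^\dagger,\,PEP=E\}$ of Hermitian matrices supported on $\mathrm{range}(A)$, which has real dimension $r^2\geq 4$. Imposing $E_{kk}=0$ for $k=1,2,3$ cuts out a subspace of real dimension at least $r^2-3\geq 1$, so there exists a nonzero such $E$. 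Because $A$ is strictly positive on $\mathrm{range}(A)$, the matrices $A\pm tE$ remain PSD for all sufficiently small $t>0$, and they retain unit diagonal by construction; hence $A\pm tE\in\mathcal{E}_3$ and $A=\tfrac{1}{2}(A+tE)+\tfrac{1}{2}(A-tE)$ is a nontrivial convex decomposition. So $A$ is not extreme, and extreme points of $\mathcal{E}_3$ must have rank one, which forces them to be of the form $|u\rangle\langle u|$ with $|u_k|=1$.

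The main obstacle is the perturbation step of the previous paragraph: one must verify the dimension count carefully (treating Hermitian matrices on $\mathrm{range}(A)$ as a real vector space of real dimension $r^2$) and check that $A\pm tE$ stays PSD, which relies on $E$ being supported on the range of $A$ where $A$ is strictly positive. Granting this, the theorem is immediate: every $A\in\mathcal{E}_3$ is a convex combination of rank-one correlation matrices $|u\rangle\langle u|$ with $|u_k|=1$, and translating back through the Schur correspondence expresses every GI map on $\mathbb{C}^3$ as a convex combination of diagonal unitary conjugations, i.e., as a mixed-unitary map. The same line of argument in fact characterizes the extreme points of the complex elliptope $\mathcal{E}_d$ as those of rank $r$ with $r^2\leq d$, which indicates why the approach is sharp for $d=3$ and breaks down for $d\geq 4$.
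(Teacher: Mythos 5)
Your proof is correct, and it takes a genuinely different route from the paper's. The paper first proves (Lemma \ref{thm:extremal}) that a GI map is extremal among GI maps iff it is extremal among unital maps, and then invokes the Choi-type extremality criterion for unital channels (Theorem 4.21 in \cite{Watrous2011}): for diagonal Kraus operators the matrices $\mathcal{K}_{ij}=K_i^\dagger K_j\oplus K_jK_i^\dagger$ collapse to $D_{ij}\oplus D_{ij}$ with $D_{ij}$ diagonal $3\times3$, so the $|i|^2\geq 4$ of them cannot be linearly independent inside a $3$-dimensional space, forcing extremal GI qutrit maps to be single diagonal unitaries. You instead work entirely on the Schur side, identifying GI maps affinely with the complex elliptope $\mathcal{E}_3$ and showing its extreme points are rank one by the standard perturbation/dimension-count argument ($r^2$ real parameters for Hermitian $E$ supported on $\mathrm{range}(A)$ versus $3$ diagonal constraints); the steps you flag as needing care (the real dimension count and the positivity of $A\pm tE$ using strict positivity of $A$ on its range) all go through. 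The two counts are essentially dual to one another ($|i|^2\geq 4$ versus $3$ in the paper, $r^2\geq 4$ versus $3$ in yours), but your version avoids Lemma \ref{thm:extremal} and the channel-extremality machinery entirely, needing only Theorem \ref{thm:Schur}, the Minkowski theorem, and elementary linear algebra; it also makes transparent why the result fails for $d\geq 4$, namely that rank-$2$ extreme points of $\mathcal{E}_d$ first become possible when $r^2\leq d$. One small caveat on your closing remark: the perturbation argument only shows $r^2\leq d$ is \emph{necessary} for extremality, not that every rank-$r$ element with $r^2\leq d$ is extreme, but this does not affect the proof for $d=3$.
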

\begin{proof}
We recall from above that every GI operation can be written as a convex
combination of extremal GI operations. Clearly, diagonal unitaries
are extremal GI operations. Thus, the proof is complete if we show
that diagonal unitaries are the only extremal GI operations for qutrits.

For this, let $\Lambda(\cdot)=\sum_{i}K_{i}\cdot K_{i}^{\dag}$ be
a Kraus decomposition of an arbitrary GI map, where without loss of
generality the diagonal $\{K_{i}\}$ are taken to be linearly independent
(i. e. a so-called minimal representation \cite{Holevo2012}). Since
a map is then unitary if and only if $|i|=1$, we have to show that
$\Lambda$ is not extremal whenever $|i|\geq2$. It is shown in Theorem
4.21 in \cite{Watrous2011} that a unital map on $\mathcal{L}(\mathbb{C}^{d})$
given by a set of linearly independent Kraus operators $\{K_{i}\}$
is extremal in the set of unital maps if and only if the collection
of $|i|^{2}$ $d^{2}\times d^{2}$ matrices
\begin{equation}
\mathcal{K}_{ij}=\left(\begin{array}{cc}
K_{i}^{\dag}K_{j} & 0\\
0 & K_{j}K_{i}^{\dag}
\end{array}\right)\label{eq:K}
\end{equation}
is linearly independent. Notice that in the case of GI operations
it holds that $[K_{i}^{\dag},K_{j}]=0$ and, therefore, all matrices
$\mathcal{K}_{ij}$ are of the form $D_{ij}\oplus D_{ij}$ for diagonal
matrices $D_{ij}=K_{i}^{\dagger}K_{j}$. In our case $d=3$, the matrices
$\mathcal{K}_{ij}$ span then at most a 3-dimensional subspace. However,
if $|i|\geq2$ we have at least four matrices $\mathcal{K}_{ij}$,
and it is hence impossible that all of them are linearly independent.
Thus, $\Lambda$ is not extremal in the set of unital maps whenever
$|i|\geq2$. Lemma \ref{thm:extremal} ensures that a GI map which
is not extremal in the set of unital maps is also not extremal in
the subset of GI maps.
\end{proof}
The above theorem shows that all GI operations for a single qutrit
can be written as a mixture of diagonal unitaries, thus extending
the same statement for qubits found in \cite{Streltsov2015d}. It
is now natural to ask if this result also extends to higher dimensions.
This is not the case for dimension 4 or larger.
\begin{thm}
For every $d\geq4$ there exists a GI map which is not mixed-unitary.\end{thm}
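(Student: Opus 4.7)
The plan is to follow the dimension-counting strategy of Theorem~\ref{thm:qutrit} but to exploit the fact that for $d\geq 4$ Watrous's extremality criterion no longer forces extremal GI maps to be unitaries. For a GI map with linearly independent diagonal Kraus operators $\{K_1,\dots,K_n\}$, the matrices $\mathcal{K}_{ij}$ of Eq.~(\ref{eq:K}) all take the form $D_{ij}\oplus D_{ij}$ with $D_{ij}=K_i^\dagger K_j$ diagonal, so they live in an effectively $d$-dimensional space. Theorem 4.21 in \cite{Watrous2011} characterizes extremality in the unital cone as linear independence of the $n^2$ matrices $\mathcal{K}_{ij}$; for $d\geq 4$ the dimensional budget $n^2\leq d$ now admits $n=2$, opening the door to a non-unitary extremal GI map.

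First I would exhibit an explicit pair of diagonal Kraus operators in $d=4$ satisfying completeness $K_1^\dagger K_1+K_2^\dagger K_2=\openone$ and for which the four diagonal matrices $K_i^\dagger K_j$ are linearly independent. A natural candidate, already used above to establish that $\mathrm{GIO}\nsubset\mathrm{PIO}$, is
\[
K_1=\mathrm{diag}(1,0,\cos\theta,\cos\theta),\quad K_2=\mathrm{diag}(0,1,\sin\theta,i\sin\theta)
\]
for $\theta\in(0,\pi/2)$. Checking linear independence of the four associated diagonal vectors in $\mathbb{C}^4$ reduces to a single $4\times 4$ determinant, in which the lower-right $2\times 2$ block has entries proportional to $\sin\theta\cos\theta$ and $\pm i\sin\theta\cos\theta$ and thus contributes a nonzero factor, while the first two rows are already reduced.

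Once extremality in the unital cone is in hand, Lemma~\ref{thm:extremal} promotes it to extremality in the set of GI maps. The final step is to show that an extremal GI map built from two linearly independent Kraus operators cannot be mixed-unitary. If $\Lambda=\sum_k p_k U_k\rho U_k^\dagger$ were such a decomposition, then $\{\sqrt{p_k}U_k\}$ is a Kraus representation of $\Lambda$; by Theorem~\ref{thm:Schur} every Kraus representation of a GI map consists of diagonal operators, so each $U_k$ is itself a diagonal unitary and therefore a GI map. Extremality then collapses the convex combination to a single diagonal unitary, forcing $\Lambda$ to admit a one-element Kraus representation, in contradiction with the construction.

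For general $d>4$ I would pad the example by setting $K_1=\openone$ and $K_2=0$ on the additional diagonal entries; completeness is preserved, and linear independence of the four matrices $K_i^\dagger K_j$ on the first four coordinates survives the embedding. The main difficulty I anticipate is less conceptual than computational: simultaneously satisfying completeness and full linear independence of the four matrices $K_i^\dagger K_j$. The asymmetric choice above, with a genuine factor of $i$ in one entry of $K_2$, is designed precisely to prevent the lower-right block from degenerating into a rank-one contribution, which would be the natural failure mode of more symmetric choices such as $K_2=\mathrm{diag}(0,1,\sin\theta,\sin\theta)$.
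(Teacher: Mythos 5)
Your proposal is correct and follows essentially the same route as the paper: extremality in the unital cone via the linear-independence criterion of Theorem 4.21 in \cite{Watrous2011}, promotion to extremality among GI maps via Lemma~\ref{thm:extremal}, and padding with $a_i=1$, $b_i=0$ for $d>4$. The only difference is the explicit pair of Kraus operators (your $\mathrm{diag}(1,0,\cos\theta,\cos\theta)$, $\mathrm{diag}(0,1,\sin\theta,i\sin\theta)$ versus the paper's $a_k=1/k$, $b_k=i^k\sqrt{1-a_k^2}$), and your determinant check $-2i\sin^2\theta\cos^2\theta\neq 0$ is correct, so both examples work equally well.
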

\begin{proof}
We will first show this for $d=4$, and extend it to all dimensions
$d\geq4$ below. We will prove the statement by presenting a GI map
which is extremal but not unitary, and thus cannot be written as a
mixture of diagonal unitaries. According to Lemma \ref{thm:extremal},
it suffices to check extremality in the set of unital maps.

For this, we define two linearly independent Kraus operators $K_{1}=\mathrm{diag}(a_{1},a_{2},a_{3},a_{4})$
and $K_{2}=\mathrm{diag}(b_{1},b_{2},b_{3},b_{4})$ with $a_{k}=1/k$
and $b_{k}=i^{k}\sqrt{1-a_{k}^{2}}$ for $k=1,2,3,4$. It is easy
to check that these two Kraus operators define a genuinely incoherent
quantum operation. We will now show that this operation is extremal,
and thus cannot be written as a mixture of unitaries. For this, we
will use similar arguments as in the proof of Theorem \ref{thm:qutrit}.
In particular, we now have four $16\times16$ matrices $\mathcal{K}_{ij}$
given by the Kraus operators $K_{1}$ and $K_{2}$ via Eq.~(\ref{eq:K}).
The proof is complete by proving that these four matrices are linearly
independent.

For proving this we note that these matrices are directly related
to the following four vectors:
\begin{equation}
t_{i}=|a_{i}|^{2},\quad u_{i}=|b_{i}|^{2},\quad v_{i}=\overline{a_{i}}b_{i},\quad w_{i}=a_{i}\overline{b_{i}}.
\end{equation}
In particular, $\mathcal{K}_{11}=\mathrm{diag}(\boldsymbol{t})\oplus\mathrm{diag}(\boldsymbol{t})$,
$\mathcal{K}_{22}=\mathrm{diag}(\boldsymbol{u})\oplus\mathrm{diag}(\boldsymbol{u})$,
$\mathcal{K}_{12}=\mathrm{diag}(\boldsymbol{v})\oplus\mathrm{diag}(\boldsymbol{v})$,
and $\mathcal{K}_{21}=\mathrm{diag}(\boldsymbol{w})\oplus\mathrm{diag}(\boldsymbol{w})$.
For completing the proof, it is thus enough to show that these four
vectors are linearly independent. This can be done by verifying that
the determinant of the matrix $(\boldsymbol{t},\boldsymbol{u},\boldsymbol{v},\boldsymbol{w})$
is nonzero.

We will now show how the above arguments can also be used for dimensions
$d\geq4$. In this case we define two Kraus operators $K_{1}=\mathrm{diag}(a_{1},\ldots,a_{d})$
and $K_{2}=\mathrm{diag}(b_{1},\ldots,b_{d})$, where $a_{i}$ and
$b_{i}$ are defined in the same way as above for $i\leq4$. For $i>4$
we define $a_{i}=1$ and $b_{i}=0$ \footnote{We chose $a_{i}=1$ and $b_{i}=0$ for simplicity, but any other choice
of parameters $a_{i}$ and $b_{i}$ satisfying $|a_{i}|^{2}+|b_{i}|^{2}=1$
for $i>4$ will also lead to the desired result.}. It can be verified by inspection that the previous arguments also
apply in this situation, thus leading to a genuinely incoherent operation
which is extremal but not unitary. This completes the proof for all
dimensions $d\geq4$ .
\end{proof}
These considerations complete our investigation on the structure of
GI operations.

\section{\label{sec:appendixB}FI maps do not allow to prepare arbitrary mixed
incoherent states from arbitrary input states}

As argued in Sec.\ \ref{SecFI}, pure incoherent states can always
be obtained from any state by some FI operation. In this section we
also mentioned that for mixed incoherent states this is no longer
the case even though they are free states as well. To see a particular
example, take any $d$-dimensional state $\rho$ such that it does
not hold that $\rho_{ii}=1/d$ $\forall i$. It turns out that any
such state cannot be mapped by FI operations to the $d$-dimensional
maximally mixed state $\openone_{d}/d$, which is incoherent. The
reason is the following. Clearly, $\rho$ cannot be mapped by GI operations
to $\openone_{d}/d$ (Corollary \ref{puretomixedGI}). As in the proof
of Theorem \ref{thm:FIconv}, this also excludes any FI map with Kraus
operators whose form is any row permutation of a diagonal matrix.
This is because these maps can be written as $P\Lambda_{\mathrm{gi}}[\rho]P^{\dagger}$
for an arbitrary permutation $P$ and GI map $\Lambda_{gi}$. However,
since the output of the map, $\openone_{d}/d$, is left invariant
under permutations, we would need that $\Lambda_{\mathrm{gi}}[\rho]=\openone_{d}/d$,
which is impossible as argued above. Thus, the remaining possible
FI maps must be such that their Kraus operators are rank-deficient.
However, since Kraus operators of FI maps have all the same form,
it cannot be that the outputs of such maps are supported in the full
$d$-dimensional space. This shows that the transformation $\rho\to\openone_{d}/d$
with $\rho$ as defined above cannot be implemented by FI operations.

 \bibliographystyle{apsrev4-1}
\bibliography{literature}

\end{document}